\documentclass[prx,reprint,floatfix,letterpaper,twocolumn,nofootinbib,showpacs,longbibliography,groupaddress]{revtex4-1}

%
%
%


\usepackage{xy}
\xyoption{matrix}
\xyoption{frame}
\xyoption{arrow}
\xyoption{arc}

\usepackage{ifpdf}
\ifpdf
\else
\PackageWarningNoLine{Qcircuit}{Qcircuit is loading in Postscript mode.  The Xy-pic options ps and dvips will be loaded.  If you wish to use other Postscript drivers for Xy-pic, you must modify the code in Qcircuit.tex}
\xyoption{ps}
\xyoption{dvips}
\fi

\entrymodifiers={!C\entrybox}

\newcommand{\ket}[1]{{\left\vert{#1}\right\rangle}}
\usepackage[caption=false]{subfig}
\usepackage{graphicx} 
\usepackage{epstopdf}
\usepackage{array}
\usepackage{verbatim}
\usepackage{amsmath,amsfonts,amssymb,amscd}
\usepackage{amsthm}
\usepackage{tabularx}
\usepackage{stmaryrd}
\usepackage{enumerate}
\usepackage[ruled]{algorithm2e}
\usepackage{enumitem}
\usepackage{wasysym}
\usepackage[dvipsnames]{xcolor}
\usepackage[normalem]{ulem}
\usepackage[bottom]{footmisc}
\usepackage{bbm}

\usepackage{hyperref}
\hypersetup{
    bookmarksnumbered=true, 
    unicode=false, 
    pdfstartview={FitH}, 
    pdftitle={}, 
    pdfauthor={}, 
    pdfsubject={}, 
    pdfcreator={}, 
    pdfproducer={}, 
    pdfkeywords={}, 
    pdfnewwindow=true, 
    colorlinks=true, 
    linkcolor=blue, 
    citecolor=blue, 
    filecolor=blue, 
    urlcolor=blue 
}

\newcounter{ex}

\theoremstyle{plain}
\newtheorem{thm}{Theorem}

\newtheorem{lem}[thm]{Lemma}

\newtheorem{example}[ex]{Example}

\theoremstyle{definition}
\newtheorem{defn}[thm]{Definition}

\newcommand{\pvec}[1]{\vec{#1}\mkern2mu\vphantom{#1}}
\newcommand{\eq}[1]{(\hyperref[eq:#1]{\ref*{eq:#1}})}

\renewcommand{\sec}[1]{\hyperref[sec:#1]{Section~\ref*{sec:#1}}}
\newcommand{\thrm}[1]{\hyperref[thm:#1]{Theorem~\ref*{thm:#1}}}
\newcommand{\lemm}[1]{\hyperref[lemm:#1]{Lemma~\ref*{lemm:#1}}}
\newcommand{\prop}[1]{\hyperref[prop:#1]{Proposition~\ref*{prop:#1}}}
\newcommand{\corr}[1]{\hyperref[corr:#1]{Corollary~\ref*{corr:#1}}}
\newcommand{\fig}[1]{\hyperref[fig:#1]{Figure~\ref*{fig:#1}}}

\makeatletter
\def\legendre@dash#1#2{\hb@xt@#1{%
  \kern-#2\p@
  \cleaders\hbox{\kern.5\p@
    \vrule\@height.2\p@\@depth.2\p@\@width\p@
    \kern.5\p@}\hfil
  \kern-#2\p@
  }}
\def\@legendre#1#2#3#4#5{\mathopen{}\left(
  \sbox\z@{$\genfrac{}{}{0pt}{#1}{#3#4}{#3#5}$}%
  \dimen@=\wd\z@
  \kern-\p@\vcenter{\box0}\kern-\dimen@\vcenter{\legendre@dash\dimen@{#2}}\kern-\p@
  \right)\mathclose{}}
\newcommand\legendre[2]{\mathchoice
  {\@legendre{0}{1}{}{#1}{#2}}
  {\@legendre{1}{.5}{\vphantom{1}}{#1}{#2}}
  {\@legendre{2}{0}{\vphantom{1}}{#1}{#2}}
  {\@legendre{3}{0}{\vphantom{1}}{#1}{#2}}
}
\def\dlegendre{\@legendre{0}{1}{}}
\def\tlegendre{\@legendre{1}{0.5}{\vphantom{1}}}
\makeatother

\DeclareMathAlphabet{\matheu}{U}{eus}{m}{n}

\DeclareMathOperator{\tr}{tr}

\newcommand{\I}{{\mathbb I}}



\newcolumntype{L}[1]{>{\raggedright}p{#1}}
\newcolumntype{C}[1]{>{\centering}p{#1}}
\newcolumntype{R}[1]{>{\raggedleft}p{#1}}
\newcolumntype{D}{>{\centering\arraybackslash}X}

\definecolor{darkgreen}{rgb}{0,0.5,0}
\definecolor{darkblue}{rgb}{0,0,0.5}

\newcommand{\diag}[1]{\text{diag}(#1)}
\newcommand{\rank}[1]{\text{rank}(#1)}
\newcommand{\row}[1]{\text{row}(#1)}
\newcommand{\col}[1]{\text{col}(#1)}

\begin{document}
\title{Optimal quantum subsystem codes in 2-dimensions}
\author{Theodore J.~Yoder}
\affiliation{IBM T.J.~Watson Research Center}
\email{ted.yoder@ibm.com}
\begin{abstract}
Given any two classical codes with parameters $[n_1,k,d_1]$ and $[n_2,k,d_2]$, we show how to construct a quantum subsystem code in 2-dimensions with parameters $\llbracket N,K,D\rrbracket$ satisfying $N\le 2n_1n_2$, $K=k$, and $D=\min(d_1,d_2)$. These quantum codes are in the class of generalized Bacon-Shor codes introduced by Bravyi. We note that constructions of good classical codes can be used to construct quantum codes that saturate Bravyi's bound $KD=O(N)$ on the code parameters of 2-dimensional subsystem codes. One of these good constructions uses classical expander codes. This construction has the additional advantage of a linear time quantum decoder based on the classical Sipser-Spielman flip decoder. Finally, while the subsystem codes we create do not have asymptotic thresholds, we show how they can be gauge-fixed to certain hypergraph product codes that do.
\end{abstract}

\maketitle

\section{Introduction}
One of the perhaps more surprising facts to come out of quantum information theory is the close relation between classical and quantum error-correcting codes. Exemplary of this relation is the Calderbank-Shor-Steane (CSS) construction \cite{Calderbank1996,Steane1996}, which maps two classical codes (the first's dual contained in the second) to a quantum code. Important concepts in classical coding have analogous quantum concepts. For instance, a good family of classical $[n,k,d]$ or quantum $\llbracket n,k,d\rrbracket$ codes is one that asymptotically achieves constant rate $k/n$ and constant relative distance $d/n$. Using the CSS construction, one can draw on what is known classically to prove the existence of asymptotically good families of quantum codes \cite{Calderbank1996} and even construct them \cite{Ashikhmin2001,Chen2001}.

Because the classical codes input to the CSS construction must be related, it is sometimes difficult to use the CSS construction directly to make quantum codes with desirable properties. For example, the low-density parity check (LDPC) property, which can be defined for classical \cite{Gallager1962} or quantum \cite{MacKay2004,Tillich2014} codes alike, demands that every parity or stabilizer check involves a constant number of bits or qubits and every bit or qubit is involved in a constant number of checks. It is pointed out in \cite{MacKay2004} that one needs to use bad (i.e.~not good) classical LDPC codes to make quantum LDPC codes via the CSS construction, and that bad classical LDPC codes are uncommon, both because they are not worth studying if one is solely motivated by classical applications, but also because, asymptotically, most classical LDPC codes are actually good.

To easily create LDPC quantum codes, another method of converting classical codes to quantum ones has been developed. The hypergraph product \cite{Tillich2014} converts \emph{any} two classical codes to a quantum code. Notably, if the constituent classical codes are LDPC, so is the quantum code. The popular surface code is a special case, the hypergraph product of two classical repetition codes.

Yet, due to anticipated hardware limitations, it is common to place even more practical constraints on quantum codes beyond the LDPC condition. A popular demand is that parity checks are geometrically local in 2-dimensions so that it is unnecessary to interact qubits that are physically far apart in the plane. Bounds are known on the parameters $\llbracket N,K,D\rrbracket$ of 2-dimensional quantum codes of stabilizer subspace \cite{Bravyi2010} and subsystem \cite{Bravyi2011} varieties. The subspace bound $KD^2=O(N)$ is saturated constructively by the surface code \cite{Bravyi1998,Kitaev2003} and its relatives. The subsystem bound $KD=O(N)$ is known to be tight \cite{Bravyi2011}, but explicit constructions have heretofore been lacking.

Here, we establish another relation between classical and quantum codes. We show how to create an $\llbracket N,K,D\rrbracket$ quantum subsystem code that is local in 2-dimensions from any two classical codes with parameters $[n_1,k,d_1]$ and $[n_2,k,d_2]$ and prove that $N\le2n_1n_2$, $K=k$, and $D=\min(d_1,d_2)$. The quantum code belongs to the class of generalized Bacon-Shor codes introduced by Bravyi \cite{Bravyi2011}, a class we therefore refer to simply as Bravyi-Bacon-Shor codes. One can recover the traditional Bacon-Shor code \cite{Bacon2006,Aliferis2007} from our construction by starting with two classical repetition codes.

Bravyi-Bacon-Shor codes created this way have two important properties related to the constituent classical codes. First, if the classical codes are good, then the Bravyi-Bacon-Shor codes saturate the 2-dimensional subsystem code bound $KD=O(N)$. Second, decoders for the classical codes can be used to decode the quantum code. If the classical codes are LDPC and their decoders take linear time (in the size of the classical code), then the quantum decoding, including both data and measurement errors, takes linear time (in the size of the quantum code). Handling measurement errors in the quantum setting requires that the classical decoders handle errors in calculations of the parity checks. Though this is not a standard model in classical error-correction, the Sipser-Spielman flip decoder for expander codes \cite{Sipser1996} does apply to this situation \cite{Spielman1996}. Interestingly, decoding quantum expander codes \cite{Leverrier2015}, a kind of hypergraph product code, also employs what is in some sense a quantum version of this classical flip decoder \cite{Fawzi2018,Fawzi2018b,Grospellier2018}.

Finally, we show how to gauge-fix Bravyi-Bacon-Shor codes. This is the process of moving encoded data from a quantum subsystem code into a related subspace code. For instance, the Bacon-Shor code can be gauge-fixed to the surface code \cite{Li2018}. Thus, as a generalization of Bacon-Shor codes, Bravyi-Bacon-Shor codes should gauge-fix to a generalization of the surface code. This is indeed the case. We show that a Bravyi-Bacon-Shor code can be gauge-fixed into certain hypergraph product codes -- the hypergraph product of a classical repetition code and (either) one of the classical codes used to build the Bravyi-Bacon-Shor code. This reveals Bravyi-Bacon-Shor codes as a kind of \emph{subsystem} hypergraph product code.

In Section~\ref{sec:code_background} we review the codes we will be discussing and establish notation. In Section~\ref{sec:constructing_and_decoding}, we provide our construction of Bravyi-Bacon-Shor codes from classical codes and show how to decode them. In Section~\ref{sec:gauge_fixing}, we gauge-fix Bravyi-Bacon-Shor codes to hypergraph product codes. Section~\ref{sec:discussion} concludes.

\section{Code Background}\label{sec:code_background}
In this section, we review the codes that play a major role in the paper. These are (a) classical codes, including transpose and LDPC codes, (b) quantum subsystem codes, (c,d) two versions of Bravyi-Bacon-Shor codes, and (e) hypergraph product codes.

\subsection{Classical codes and their transposes}\label{sec:classical}
In this paper, we use ``classical code" to mean a classical linear code. A linear code $\mathcal{C}$ is a subset of the set of length-$n$ bit strings $\mathcal{C}\subseteq\mathbb{F}_2^n$ and can be defined by a parity check matrix $H\in\mathbb{F}_2^{m\times n}$ by setting $\mathcal{C}=\ker(H)$. This means that $w\in\mathcal{C}$ if and only if $Hw=0$. Notice, however, that $H$ itself is not unique.

The number of encoded bits $k=\dim\mathcal{C}$ is related to the rank of $H$ by the rank-nullity theorem
\begin{equation}
k=n-\rank{H}.
\end{equation}
Gaussian elimination can be used to find a basis for the kernel of $H$. This basis can be arranged as the rows of a generating matrix $G\in\mathbb{F}_2^{k\times n}$ satisfying $\rank{G}=k$ and $HG^T=0$. Of course, any $G'=QG$ for full-rank matrix $Q\in\mathbb{F}_2^{k\times k}$ is an equally valid generating matrix.

The distance of the code is the minimum (Hamming) weight of a nonzero vector in $\mathcal{C}$. That is,
\begin{equation}
d=\min\{|\vec w|>0:\vec w\in\mathcal{C}\}.
\end{equation}
Code parameters of $\mathcal{C}$ are collected in the tuple notation $[n,k,d]$.

Although not part of traditional classical coding theory, the ``transposes" of a classical code will be important for defining hypergraph product codes in Section~\ref{sec:HGP}. A code $\mathcal{C}^T$ is a transpose of $\mathcal{C}$ provided a parity check matrix $H$ exists so that $\mathcal{C}=\ker(H)$ and $\mathcal{C}^T=\ker(H^T)$. Let us say that $H\in\mathbb{F}_2^{n^T\times n}$, where $T$ modifying a scalar (like $n$) is to be treated as a superscript (not the transpose). Thus, $\mathcal{C}^T$ is another linear code with parameters $[n^T,k^T,d^T]$. Codewords in $\mathcal{C}^T$ represent redundancy (linear dependencies) between parity checks, the rows of $H$. Indeed, by the rank-nullity theorem and the fact that the column rank and row rank of a matrix are equal,
\begin{equation}\label{eq:transpose_relation}
n-k=n^T-k^T.
\end{equation}
If $H$ were full rank (i.e.~no check redundancy), $n^T=n-k$ and so $k^T=0$.

The $[n,1,n]$ repetition code $\mathcal{C}_R$ will be used at several points in this paper. Its parity check matrix (without redundancy) and its generating matrix can be written as
\begin{align}\label{eq:rep_code}
H_R&=\left(\begin{array}{ccccccc}
1&1&0&\dots&0&0&0\\
0&1&1&0&\dots&0&0\\
&\ddots&&\ddots&&\ddots&\\
0&0&0&\dots&0&1&1
\end{array}\right)\in\mathbb{F}_2^{(n-1)\times n},\\
G_R&=\left(\begin{array}{cccc}1&1&\dots&1\end{array}\right)\in\mathbb{F}_2^{1\times n}.
\end{align}
When we use the repetition code, its length $n$ will be context-appropriate (e.g.~so that matrix multiplications can work).

Finally, let us briefly define classical LDPC codes.
\begin{defn}[classical LDPC \cite{Gallager1962}]\label{defn:classical_ldpc}
A classical code $\mathcal{C}$ is $(b,c)$-LDPC if there is a matrix $H\in\mathbb{F}_2^{(n-k)\times n}$ such that $\ker(H)=\mathcal{C}$, every column contains at most $b$ $1$s, and every row contains at most $c$ $1$s. We call $H$ an LDPC set of parity checks.
\end{defn}
\noindent For example, the repetition code is $(2,2)$-LDPC with $H_R$ being an LDPC set of parity checks for the code.

\subsection{Quantum subsystem codes}\label{sec:subsystem}
Before diving into the description of the quantum subsystem codes in this paper (the subsequent two sections), we review in this section some of the terminology surrounding subsystem codes in general.

Quantum subsystem codes \cite{Poulin2005} are a generalization of quantum subspace codes \cite{Gottesman1997}. We restrict ourselves to the stabilizer formalism here in which both types of codes are specified by a subgroup of the Pauli group on $n$ qubits. For subspace codes, this is an abelian subgroup, the stabilizer group. For subsystem codes, this is an arbitrary subgroup, the gauge group $\mathcal{G}$. Subsystem codes are a generalization of subspace in the sense that if $\mathcal{G}$ is abelian, then the subsystem code is also a subspace code. In the general, possibly non-abelian case, we find it convenient to remove global phases from Pauli operators when defining groups of them.

Starting from the gauge group of a subsystem code, other important groups are derived.
\begin{enumerate}
\item The bare logical operators $\mathcal{L}(\mathcal{G})$: the set of all Paulis that commute with all elements of $\mathcal{G}$, also known in group theory as the centralizer of $\mathcal{G}$.
\item The stabilizers $\mathcal{S}(\mathcal{G})$: the intersection of $\mathcal{L}(\mathcal{G})$ with $\mathcal{G}$, also known as the center of $\mathcal{G}$.
\item The dressed logical operators $\hat{\mathcal{L}}(\mathcal{G})=\mathcal{G}\text{\space}\mathcal{L}(\mathcal{G})$: the centralizer of $\mathcal{S}(\mathcal{G})$.
\end{enumerate}
We point out that $\mathcal{G}=\mathcal{S}(\mathcal{G})$ if and only if the subsystem code is also a subspace code.

Code parameters are related to properties of the above groups. For instance, we denote by $K(\mathcal{G})$ the number of encoded qubits, i.e.~$4^{K(\mathcal{G})}$ is the size of $\mathcal{L}(\mathcal{G})\setminus\mathcal{S}(\mathcal{G})$, the group of logical operators modulo stabilizers. By $D(\mathcal{G})$ we denote the code distance, the weight of the lowest weight element of $\hat{\mathcal{L}}(\mathcal{G})$. 

Using a symplectic Gram-Schmidt procedure \cite{Wilde2009}, the gauge group can always be generated by
\begin{equation}
\mathcal{G}=\langle\mathcal{S}(\mathcal{G}),\overline{X}_1,\overline{Z}_1,\dots,\overline{X}_{J(\mathcal{G})},\overline{Z}_{J(\mathcal{G})}\rangle,
\end{equation}
where all generators commute except for pairs $\overline{X}_i$ and $\overline{Z}_i$. Thus, a subsystem code is seen to be a subspace code with stabilizer $\mathcal{S}(\mathcal{G})$, but including an additional $J(\mathcal{G})$ logical qubits that we do not protect. These additional logical qubits are referred to as gauge qubits. They are unprotected because error-correction proceeds by measuring a generating set of the gauge group, and thus by measuring the gauge qubits. An advantage afforded by this measurement scheme, compared to just measuring a generating set of $\mathcal{S}(\mathcal{G})$, is that the required measurements can be much lower weight. In some cases, such as the Bacon-Shor code and the subsystem codes considered in this paper, the difference in the weights of stabilizers and gauge operators can be factor of the code distance.

\subsection{Bravyi-Bacon-Shor codes}\label{sec:BBS}
Bravyi-Bacon-Shor (BBS) codes are defined entirely by a binary matrix $A\in\mathbb{F}_2^{n_1\times n_2}$. Physical qubits of the code placed on sites $(i,j)$ of a $n_1\times n_2$ square lattice $L$ for which $A_{ij}=1$. If $|A|$ is the number of 1s in $A$, there are $N=|A|$ qubits in the code. Let us take a moment to establish notation for Pauli operators on this lattice.

A Pauli $X$ or $Z$ acting on the qubit at site $(i,j)$ in the lattice is written $X_{ij}$ or $Z_{ij}$. A Pauli operator acting on multiple qubits is specified by its support.
\begin{align}
\text{For\space}S\in\mathbb{F}_2^{n_1\times n_2},&\quad X(S)=\prod_{ij}\left(X_{ij}\right)^{S_{ij}}.
\end{align}
Of course, $S$ should be such that $S_{ij}=1$ implies $A_{ij}=1$, because qubits only exist at those sites. We say $S\subseteq A$ if this is true. We also use the notation $S\cap A$ to indicate the pointwise product of binary matrices $S$ and $A$: $(S\cap A)_{ij}=S_{ij}A_{ij}$ for all $i,j$. It is always the case that $S\cap A\subseteq A$.

Conveniently, multiplication and commutation of Paulis are equivalent to addition and inner products of the support matrices,
\begin{align}\label{eq:lattice_Pauli_multiply}
X(S_1)X(S_2)&=X(S_1+S_2),\\\label{eq:lattice_Pauli_commute}
\left[X(S_1),Z(S_2)\right]&=(-1)^{\tr\left(S_1^TS_2\right)}I
\end{align}
where $[P,Q]=PQP^\dag Q^\dag$ is the group commutator and $I$ the identity operator.

From $A$ we can also define two classical codes corresponding to its column-space and row-space:
\begin{align}
\mathcal{C}_1&=\col{A},\\
\mathcal{C}_2&=\row{A}.
\end{align}
These accordingly have generating matrices $G_1$ and $G_2$, parity check matrices $H_1$ and $H_2$, and code parameters $[n_1,k,d_1]$ and $[n_2,k,d_2]$. Both $\mathcal{C}_1$ and $\mathcal{C}_2$ encode the same number of bits $k=\rank{A}=\rank{G_1}=\rank{G_2}$ because of the well-known equivalence of matrix row and column rank.

BBS codes are subsystem codes and, as such, are described by a gauge group of Pauli operators. This gauge group can be divided into $X$-type operators and $Z$-type ones, and so in this sense BBS codes are CSS subsystem codes. The gauge group is generated by $XX$ interactions between any two qubits sharing a column of lattice $L$ and $ZZ$ interactions between any two qubits sharing a row. We can write the entire gauge groups of $X$- and $Z$-type like
\begin{align}\label{eq:ggX_bbs}
\mathcal{G}^{(\text{bbs})}_X&=\left\{X(S):G_RS=0,S\subseteq A\right\},\\\label{eq:ggZ_bbs}
\mathcal{G}^{(\text{bbs})}_Z&=\left\{Z(S):SG_R^T=0,S\subseteq A\right\},
\end{align}
recalling that $G_R=(1,1,\dots,1)$ is the generating matrix of the repetition code. Therefore, $G_RS=0$ implies that columns of $S$ have even weight and $SG_R^T=0$ implies its rows have even weight.

Bare logical operators of a subsystem code commute with all its gauge operators. In the case of BBS codes, to commute with all $Z$-type gauge operators, a bare logical $X$-type operator must be supported on entire rows of the lattice. Likewise, to commute with all $X$-type gauge operators, a bare logical $Z$-type operator must be supported on entire columns. Therefore,
\begin{align}
\mathcal{L}_X^{(\text{bbs})}&=\left\{X(S\cap A):SH_R^T=0\right\},\\
\mathcal{L}_Z^{(\text{bbs})}&=\left\{Z(S\cap A):H_RS=0\right\}.
\end{align}
An example BBS code is shown in Fig.~\ref{fig:622} with the gauge operators highlighted in part (a) and the logical operators in part (b).

\begin{figure}
\centering
\includegraphics[width=0.9\columnwidth]{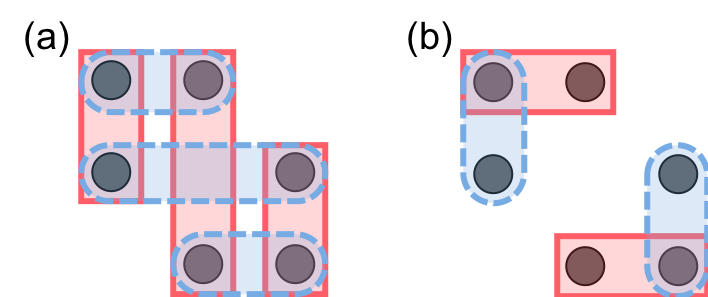}
\caption{\label{fig:622} A $\llbracket 6,2,2\rrbracket$ Bravyi-Bacon-Shor code corresponding to $A=\left(\begin{smallmatrix}1&1&0\\1&0&1\\0&1&1\end{smallmatrix}\right)$ \cite{Bravyi2011}. In (a), we encircle the supports of $X$-type (red, square, solid) and $Z$-type (blue, rounded, dashed) gauge operators. In (b), we show the supports of $X$- and $Z$-type logical operators for the two encoded qubits. We do not show them, but there are just two stabilizers, $X^{\otimes 6}$ and $Z^{\otimes 6}$.}
\end{figure}

When performing error-correction with a subsystem code, a complete generating set of gauge operators is measured. However, since not all gauge operators commute, the only reliable information gathered from this process is the eigenvalues of the stabilizers, the elements of the gauge group that do in fact commute with all gauge operators. In other words, the stabilizer is the intersection of the group of bare logical operators with the gauge group.
\begin{align}\label{eq:SbbsX}
\mathcal{S}_X^{(\text{bbs})}&=\mathcal{L}_X^{(\text{bbs})}\cap\mathcal{G}^{(\text{bbs})}_X\\
&=\left\{X(S\cap A):SH_R^T=0,G_1S=0\right\},\\\label{eq:SbbsZ}
\mathcal{S}_Z^{(\text{bbs})}&=\mathcal{L}_Z^{(\text{bbs})}\cap\mathcal{G}^{(\text{bbs})}_Z\\
&=\left\{Z(S\cap A):H_RS=0,SG_2^T=0\right\}.
\end{align}
Here $G_1S=0$ demands that each column of $S$ is a parity check of code $\mathcal{C}_1$ and thus intersects columns of $A$, which are codewords of $\mathcal{C}_1$, at an even number of places. Thus, $S\cap A$ has an even number of 1s in each column and this implies $X(S\cap A)$ is in $\mathcal{G}_X^{(\text{bbs})}$. Similar reasoning holds for the $Z$-type stabilizers.

The number of encoded qubits $K$ can be determined by counting the number of bare logical operators that are inequivalent under multiplication by stabilizers. That is, we would like the size of the quotient group $\mathcal{L}^{(\text{bbs})}_X/\mathcal{S}_X^{(\text{bbs})}$.
\begin{equation}
|\mathcal{L}^{(\text{bbs})}_X/\mathcal{S}_X^{(\text{bbs})}|=\frac{|\mathcal{L}^{(\text{bbs})}_X|}{|\mathcal{S}_X^{(\text{bbs})}|}=\frac{2^{n_1}}{|\ker(G_1)|}=2^k.
\end{equation}
Likewise, $|\mathcal{L}^{(\text{bbs})}_Z/\mathcal{S}_Z^{(\text{bbs})}|=2^k$. This implies $K=k=\rank{A}$ encoded qubits.

Dressed logical operators are bare logical operators multiplied by any number of gauge operators.
\begin{align}
\hat{\mathcal{L}}^{(\text{bbs})}_X&=\mathcal{G}^{(\text{bbs})}_X\mathcal{L}^{(\text{bbs})}_X,\\
\hat{\mathcal{L}}^{(\text{bbs})}_Z&=\mathcal{G}^{(\text{bbs})}_Z\mathcal{L}^{(\text{bbs})}_Z.
\end{align}
Equivalently, dressed logical operators are exactly those Pauli operators that commute with all stabilizers. The distance $D$ of the BBS code is the minimum nonzero weight of a dressed logical operator.

To calculate $D$, imagine first taking $X(S\cap A)\in\mathcal{L}_X^{(\text{bbs})}$ and reducing its weight by multiplying by gauge operators from $\mathcal{G}^{(\text{bbs})}_X$, which are two-qubit $X$ operators within columns. Clearly then, each column of $S\cap A$ can at best be reduced to contain either zero or one $1$ depending on the parity of the number of $1$s in that column. We calculate the parity of a column by taking its dot product with $G_R=(1,1,\dots,1)$. Thus,
\begin{equation}
\min_{g\in\mathcal{G}_X^{(\text{bbs})}}\left|gX\left(S\cap A\right)\right|=\left|G_R(S\cap A)\right|.
\end{equation}
Note that $SH_R^T=0$ if and only if rows of $S$ are codewords of the classical repetition code, i.e.~all 1s or all 0s. Accordingly, for some $\vec r\in\mathbb{F}_2^{n_1}$, $S\cap A=\diag{\vec r}A$, where $\diag{\vec r}$ is the square, diagonal matrix with $\vec r$ along the diagonal. Thus, $|G_R(S\cap A)|=|\vec rA|$, and
\begin{align}
D_X&=\min\left\{|q|>0:q\in\hat{\mathcal{L}}_X^{(\text{bbs})}\right\}\\
&=\min\left\{|\vec rA|>0:\vec r\in\mathbb{F}_2^{n_1}\right\}\\
&=\min\{|\vec x|>0:\vec x\in\row{A}\}\\
&=d_2,
\end{align}
by definition of the code distance of $\mathcal{C}_2=\row{A}$. Likewise,
\begin{align}
D_Z&=\min\left\{|q|>0:q\in\hat{\mathcal{L}}_Z^{(\text{bbs})}\right\}\\
&=\min\{|\vec x|>0:\vec x\in\col{A}\}\\
&=d_1.
\end{align}
The overall code distance of the BBS code is $D=\min(D_Z,D_X)=\min(d_1,d_2)$.

The discussion so far has reproduced Bravyi's theorem
\begin{thm}[Bravyi \cite{Bravyi2011}]\label{thm:Bravyi-Bacon-Shor}
The Bravyi-Bacon-Shor code constructed from $A\in\mathbb{F}_2^{n_1\times n_2}$, denoted $\text{BBS}(A)$, is an $\llbracket N,K,D\rrbracket$ quantum subsystem code with gauge group generated by 2-qubit operators and
\begin{align}
N&=|A|,\\
K&=\rank{A},\\ 
D&=\min\{|\vec y|>0:\vec y\in\row{A}\cup\col{A}\}.
\end{align}
\end{thm}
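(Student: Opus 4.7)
The plan is to verify the three parameters $N$, $K$, $D$ separately, taking as given the characterizations of the $X$- and $Z$-type gauge, bare logical, and stabilizer groups in \eq{ggX_bbs}, \eq{ggZ_bbs}, \eq{SbbsX}, and \eq{SbbsZ}. These characterizations all follow from the Pauli commutation relation \eq{lattice_Pauli_commute}, so the remaining task is purely combinatorial. The count $N=|A|$ and the $2$-qubit generation of the gauge group ($XX$ within columns, $ZZ$ within rows) are immediate from the definition of $\text{BBS}(A)$.

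For $K$, I would compute $|\mathcal{L}_X^{(\text{bbs})}/\mathcal{S}_X^{(\text{bbs})}|$. Bare $X$-logicals are parametrized by $\vec r\in\mathbb{F}_2^{n_1}$: the constraint $SH_R^T=0$ forces each row of $S$ to be a repetition codeword, so $S\cap A=\diag{\vec r}A$, giving $2^{n_1}$ such elements. Stabilizers are the subset additionally satisfying $G_1S=0$, which cuts the count by $|\ker(G_1)|=2^{n_1-k}$. The quotient therefore has size $2^k$, and the mirror computation via $G_2$ on the $Z$ side yields the same value, so $K=k=\rank A$.

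For $D$, I would bound the $X$- and $Z$-type distances independently. On the $X$ side, the gauge generators are two-qubit $X$ operators confined within single columns, so any $X(S\cap A)\in\mathcal{L}_X^{(\text{bbs})}$ can be reduced by gauge multiplication to an operator of weight exactly $|G_R(S\cap A)|$ (each column independently collapses to its parity). Using $S\cap A=\diag{\vec r}A$, this weight equals $|\vec r A|$, so
\begin{equation}
D_X=\min\{|\vec r A|>0:\vec r\in\mathbb{F}_2^{n_1}\}=\min\{|\vec y|>0:\vec y\in\row A\}=d_2
\end{equation}
by definition of the distance of $\mathcal{C}_2=\row A$. The symmetric argument on the $Z$ side gives $D_Z=d_1$, so $D=\min(D_X,D_Z)=\min\{|\vec y|>0:\vec y\in\row A\cup\col A\}$.

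The main obstacle is making the weight-reduction step rigorous: one must argue that no cancellation between distinct columns can beat the column-wise parity bound. This holds precisely because every $X$-type gauge generator is supported in a single column, so minimization over gauge multiplications factorizes across columns independently. Once this is granted, the identification with row-space (resp.\ column-space) codewords is immediate from the definitions of $\mathcal{C}_1$ and $\mathcal{C}_2$, and combining the $X$ and $Z$ bounds gives the stated minimum over $\row A\cup\col A$.
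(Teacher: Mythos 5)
Your proposal is correct and follows essentially the same route as the paper: counting $K$ via the quotient $|\mathcal{L}_X^{(\text{bbs})}/\mathcal{S}_X^{(\text{bbs})}|=2^{n_1}/|\ker(G_1)|=2^k$, and computing $D$ by reducing each bare logical $X(\diag{\vec r}A)$ column-by-column to its parity so that $\min_g|gX(S\cap A)|=|G_R(S\cap A)|=|\vec rA|$, whence $D_X=d_2$ and $D_Z=d_1$. The factorization-across-columns point you flag as the main obstacle is exactly the step the paper also relies on (and treats as clear, since each $X$-type gauge generator acts within a single column).
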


Assuming without loss of generality that no row or column of $A$ is all $0$s (if there is such a row or column, then it can be removed without changing the code), it is worth noting the bounds
\begin{equation}\label{eq:Aweight_bounds}
D\min(n_1,n_2)\le \min(D_Xn_1,D_Zn_2)\le |A|\le n_1n_2.
\end{equation}
The second inequality is based off the fact that each row (column) of $A$ needs to contain at least $D_X$ ($D_Z$) qubits.

\subsection{Augmented Bravyi-Bacon-Shor codes}\label{sec:aBBS}
In this subsection, we discuss geometric locality of the BBS codes. In particular, we review the modification that makes them local in 2-dimensions.

\begin{defn}[quantum LDPC codes]\label{def:quantum_LDPC}
A subsystem code with gauge group $\mathcal{G}$ is $(\beta,\gamma)$-LDPC if, there is a subset $\mathcal{G}_{\text{ldpc}}\subseteq\mathcal{G}$ such that
\begin{itemize}
\item $\mathcal{G}_{\text{ldpc}}$ generates $\mathcal{G}$, i.e.~$\mathcal{G}=\langle\mathcal{G}_{\text{ldpc}}\rangle$.
\item each qubit is in the support of at most $\beta$ of the $g\in\mathcal{G}_{\text{ldpc}}$.
\item the support of each $g\in\mathcal{G}_{\text{ldpc}}$ contains at most $\gamma$ qubits.
\end{itemize}
We refer to $\mathcal{G}_{\text{ldpc}}$ as an LDPC generating set.
\end{defn}

Every BBS code is $(4,2)$-LDPC. An LDPC generating set $\mathcal{G}_{\text{ldpc}}$ contains just the two-qubit gauge operators between consecutive qubits in a row or column.

\begin{defn}[quantum geometric locality]\label{def:quantum_geo_local}
An infinite family of $(\beta,\gamma)$-LDPC subsystem codes is local in $M$-dimensions if there is a constant $\rho$ such that all codes in the family have an LDPC generating set $\mathcal{G}_{Md}$ and the qubits of the code can be arranged on vertices of an $M$-dimensional (hyper)cubic lattice in such a way that no two qubits in the support of the same $g\in\mathcal{G}_{Md}$ are more than (Manhattan) distance $\rho$ apart.
\end{defn}

To attempt to show that a family of BBS codes is local in 2-dimensions, one might try $\mathcal{G}_{2d}=\mathcal{G}_{\text{ldpc}}$ from above. While this is of course an LDPC generating set, it is not necessarily true that elements of $\mathcal{G}_{2d}$ are supported in constant-sized regions of the 2-dimensional lattice. The difficulty is that $A$ may contain two consecutive $1$s in the same row or column that are separated by many $0$s (potentially a number of $0$s that grows with code size) and thus consecutive qubits are far apart.

To remedy this, Bravyi \cite{Bravyi2011} introduces two more qubits at every site $(i,j)$ such that $A_{ij}=0$. One qubit participates in the two-qubit gauge operators of row $i$ and the other in the gauge operators of column $j$. Hence, we now say that there are three types of qubits making up the code -- type 0 qubits reside at sites where $A_{ij}=1$, whereas type 1 and type 2 qubits reside at sites where $A_{ij}=0$. These qubit types can be used to define two lattices -- $L_1$ consists of qubits of type 0 and type 1 and $L_2$ consists of qubits of type 0 and type 2. It is important to note that the lattices share the type 0 qubits, i.e.~the lattices are identified at the sites where $A_{ij}=1$.

To distinguish Paulis acting on qubits in lattices $L_1$ or $L_2$, we use superscripts, e.q.~$X^{(L_1)}_{ij}$ or $X^{(L_2)}_{ij}$ for single-qubit Paulis and $X^{(L_1)}(S)$ or $X^{(L_2)}(S)$ for Paulis acting on multiple qubits specified by support $S$. Of course, due to the identification of qubits between $L_1$ and $L_2$, a particular (say, $X$-type) Pauli $P$ does not have unique supports $S_1,S_2$ such that $P=X^{(L_1)}(S_1)X^{(L_2)}(S_2)$. Indeed, letting $\mathbbm{1}$ be the matrix of all 1s,
\begin{equation}
X^{(L_1)}(S_1)X^{(L_2)}(S_2)=X^{(L_1)}(T_1)X^{(L_2)}(T_2)
\end{equation}
if and only if $S_1\cap(\mathbbm{1}-A)=T_1\cap(\mathbbm{1}-A)$, $S_2\cap(\mathbbm{1}-A)=T_2\cap(\mathbbm{1}-A)$, and $(S_1\cap A)+(S_2\cap A)=(T_1\cap A)+(T_2\cap A)$.

Using this notation, the augmented Bravyi-Bacon-Shor code (aBBS) has gauge groups
\begin{align}\label{eq:ggX_aBBS}
\mathcal{G}^{(\text{abbs})}_X&=\{X^{(L_1)}(S)X^{(L_2)}(T):G_RS=0,T\subseteq\mathbbm{1}-A\},\\\label{eq:ggZ_aBBS}
\mathcal{G}^{(\text{abbs})}_Z&=\{Z^{(L_1)}(S)Z^{(L_2)}(T):TG_R^T=0,S\subseteq\mathbbm{1}-A\}.
\end{align}

Intuition for this gauge group arises by developing a generating set local in 2-dimensions. This generating set $\mathcal{G}_{2d}$ can be chosen to be the set of all two-qubit gauge operators on \emph{neighboring} qubits in the lattices as well as all one-qubit gauge operators. That is, with $[t]=\{1,2,\dots,t\}$, we have
\begin{align}\label{eq:g2d}
\mathcal{G}_{2d}=&\{X^{(L_1)}_{ij}X^{(L_1)}_{i+1,j}:i\in[n_1-1],j\in[n_2]\}\\
&\cup\{Z^{(L_2)}_{ij}Z^{(L_2)}_{i,j+1}:i\in[n_1],j\in[n_2-1]\}\\
&\cup\{X^{(L_2)}_{ij}:A_{ij}=0\}\\
&\cup\{Z^{(L_1)}_{ij}:A_{ij}=0\}.
\end{align}
This set, which has $4n_1n_2-(n_1+n_2)-2|A|$ independent generators, is clearly local in 2-dimensions: it consists of two-qubit $X$ operators between qubits sharing a column in lattice 1, two-qubit $Z$ operators between qubits sharing a row in lattice 2, and single-qubit operators on type 1 and type 2 qubits. An example of $\mathcal{G}_{2d}$ for a $\llbracket 12,2,2\rrbracket$ aBBS code is shown in Fig.~\ref{fig:aug_622}.

\begin{figure}
\centering
\includegraphics[width=0.9\columnwidth]{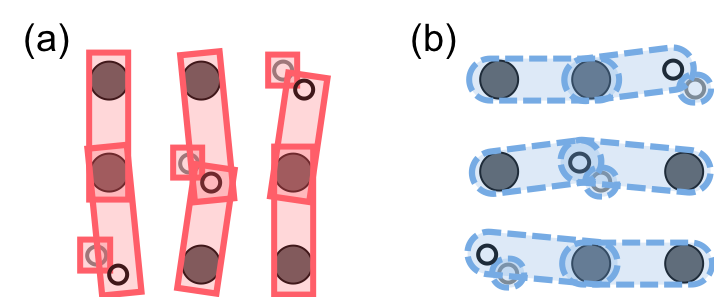}
\caption{\label{fig:aug_622} Generating sets of (a) $X$-type and (b) $Z$-type gauge operators for the augmented version of the code from Fig.~\ref{fig:622}. Type 0 qubits are shown as large, filled circles, while type 1 and 2 qubits are small and unfilled. The generating sets consist entirely of two-qubit (dark) and single-qubit (light) operators.}
\end{figure}

Bare logical operators and stabilizers of an aBBS code are derived similarly to those of a BBS code. Rather than go through those arguments again, we just record the results here.
\begin{align}
\mathcal{L}^{(\text{abbs})}_X&=\{X^{(L_2)}(S):SH_R^T=0\},\\
\mathcal{L}^{(\text{abbs})}_Z&=\{Z^{(L_1)}(S):H_RS=0\},\\\label{eq:SabbsX}
\mathcal{S}^{(\text{abbs})}_X&=\{X^{(L_2)}(S):SH_R^T=0,G_1S=0\},\\\label{eq:SabbsZ}
\mathcal{S}^{(\text{abbs})}_Z&=\{Z^{(L_1)}(S):H_RS=0,SG_2^T=0\}.
\end{align}
Code parameters $K$ and $D$ are also unchanged. Collecting this into a theorem, we have:
\begin{thm}[Bravyi \cite{Bravyi2011}]\label{thm:augmented_Bravyi-Bacon-Shor}
The augmented Bravyi-Bacon-Shor code constructed from $A\in\mathbb{F}_2^{n_1\times n_2}$, denoted $\text{aBBS}(A)$, is an $\llbracket N,K,D\rrbracket$ quantum subsystem code that is local in 2-dimensions, has a gauge group generated by 1- or 2-qubit operators, and
\begin{align}
N&=2n_1n_2-|A|,\\
K&=\rank{A},\\
D&=\min\{|\vec y|>0:\vec y\in\row{A}\cup\col{A}\}.
\end{align}
\end{thm}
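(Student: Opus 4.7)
The plan is to mirror the proof of \thrm{Bravyi-Bacon-Shor} for $\text{BBS}(A)$, adapting each step to account for the augmented lattice structure and the identification of type 0 qubits between $L_1$ and $L_2$. The qubit count $N = 2 n_1 n_2 - |A|$ is immediate from adding the $|A|$ shared type 0 qubits to the $n_1 n_2 - |A|$ qubits each of types 1 and 2. Both 2D locality and the restriction to 1- or 2-qubit generators are read off $\mathcal{G}_{2d}$ in \eq{g2d}, once I verify it generates the full gauge group: for $\mathcal{G}^{(\text{abbs})}_X$, any $X^{(L_1)}(S)$ with $G_R S = 0$ is a telescoping product of nearest-neighbor pair generators $X^{(L_1)}_{ij} X^{(L_1)}_{i+1,j}$ within each column, and any $X^{(L_2)}(T)$ with $T \subseteq \mathbbm{1} - A$ is a product of single-qubit $X^{(L_2)}_{ij}$ at sites with $A_{ij} = 0$; a symmetric argument handles $\mathcal{G}^{(\text{abbs})}_Z$.

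For $K$, I would derive $\mathcal{L}^{(\text{abbs})}_X$ and $\mathcal{S}^{(\text{abbs})}_X$ by the commutation calculation used in \sec{BBS}. Demanding commutation with every $Z^{(L_1)}(S)$ for $S \subseteq \mathbbm{1} - A$ forces an $X$-type bare logical to have zero type 1 support, and commutation with every $Z^{(L_2)}(T)$ with $T G_R^T = 0$ forces its $L_2$-support $S$ to satisfy $S H_R^T = 0$; absorbing any residual $L_1$ content onto type 0 qubits via the identification then yields $\mathcal{L}^{(\text{abbs})}_X$. Intersecting with $\mathcal{G}^{(\text{abbs})}_X$ adds the stabilizer condition $G_1 S = 0$ as in \eq{SabbsX}. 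Writing $S = \vec r\, \mathbbm{1}^T$, logicals are parameterized by $\vec r \in \mathbb{F}_2^{n_1}$ and stabilizers by $\vec r \in \ker(G_1)$, so $|\mathcal{L}^{(\text{abbs})}_X / \mathcal{S}^{(\text{abbs})}_X| = 2^{n_1}/2^{n_1 - k} = 2^k$; the symmetric $Z$-type calculation confirms $K = \rank{A}$.

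For $D$, the weight-minimization parallels the BBS case. Starting from $X^{(L_2)}(\vec r\, \mathbbm{1}^T) \in \mathcal{L}^{(\text{abbs})}_X$, single-qubit gauges $X^{(L_2)}_{ij}$ with $A_{ij} = 0$ clear all type 2 support, and gauges $X^{(L_1)}(S')$ with $G_R S' = 0$ and $S' \subseteq A$ (which, by the identification, act as $X^{(L_2)}(S')$ on the shared type 0 qubits) can eliminate any even-parity pattern within each column of $A$. The residual type 0 weight is therefore $|\vec r^T A|$, one qubit per column $j$ in which $\vec r \cdot \vec a_j = 1$. Since $\row{G_1} = \col{A}$ implies $\ker(G_1) = \{\vec v : \vec v^T A = 0\}$, the quantity $|\vec r^T A|$ is invariant on cosets of $\vec r + \ker(G_1)$, so minimizing over $\vec r \notin \ker(G_1)$ yields $D_X = \min\{|\vec x| > 0 : \vec x \in \row{A}\} = d_2$; by symmetry $D_Z = d_1$, hence $D = \min(d_1, d_2)$.

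The main obstacle I expect is the rigorous derivation of $\mathcal{L}^{(\text{abbs})}_X$ and $\mathcal{S}^{(\text{abbs})}_X$: the many-to-one identification between $(U, V)$ pairs in $X^{(L_1)}(U) X^{(L_2)}(V)$ form and the underlying Paulis requires careful bookkeeping, and in particular one must confirm that every element of the centralizer of $\mathcal{G}^{(\text{abbs})}_Z$ admits a representative supported entirely on $L_2$. Once the logical and stabilizer groups are pinned down, the remaining weight-minimization and counting arguments follow the BBS proof essentially verbatim.
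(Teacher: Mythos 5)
Your proposal is correct and follows essentially the same route as the paper, which itself derives Theorem~\ref{thm:augmented_Bravyi-Bacon-Shor} by noting that the bare logicals, stabilizers, $K$, and $D$ are "derived similarly" to the BBS case and simply recording Eqs.~\eqref{eq:SabbsX}--\eqref{eq:SabbsZ}, with locality read off the generating set $\mathcal{G}_{2d}$. If anything, you spell out more of the bookkeeping (verifying $\mathcal{G}_{2d}$ generates the gauge group, and the coset-invariance of $|\vec r^T A|$) than the paper does.
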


\subsection{Hypergraph Product Codes}\label{sec:HGP}
Introduced in \cite{Tillich2014}, the hypergraph product takes two classical parity check matrices $H_1\in\mathbb{F}_2^{n_1^T\times n_1}$ and $H_2\in\mathbb{F}_2^{n_2^T\times n_2}$ and produces a quantum code. The code ultimately is of CSS type (though the traditional CSS construction is not used to obtain it) and so its stabilizer group can be separated into stabilizers of Pauli $X$-type and those of Pauli $Z$-type.

Our description of the hypergraph product is a little unconventional but is in line with how we described BBS and aBBS codes, making it easier to relate the two later. It is essentially a description in terms of the ``reshaped" matrices used at some points by Campbell \cite{Campbell2018}.

In our notation, qubits of the hypergraph product code are placed on the vertices of two square lattices (see Fig.~\ref{fig:hypergraph_prod}). The first lattice $L$ is $n_1\times n_2$. The second lattice $l$ is $n_1^T\times n_2^T$. A Pauli $X$ or $Z$ acting on the qubit at site $(i,j)$ in lattice $L$ is denoted $X_{ij}^{(L)}$ or $Z_{ij}^{(L)}$ and similarly for Paulis acting in lattice $l$. A Pauli operator acting on multiple qubits is specified by its support, e.g.~$X^{(L)}(S)$ or $Z^{(L)}(S)$, just as for the BBS and aBBS codes.

\begin{figure}
\includegraphics[width=0.475\columnwidth]{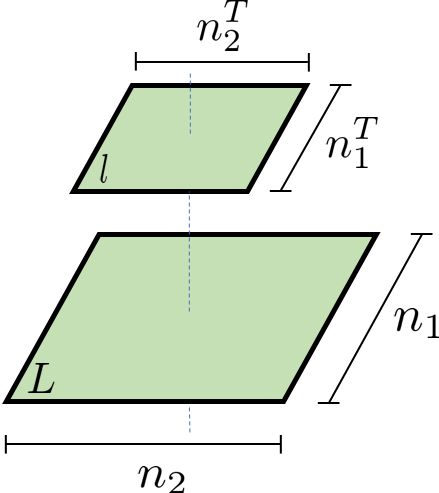}
\caption{\label{fig:hypergraph_prod} The two lattices of qubits that make up a hypergraph product code.}
\end{figure}

On the classical side, we define generating matrices $G_1$ and $G_2$ for the classical codes $\mathcal{C}_1$ and $\mathcal{C}_2$ corresponding to $H_1$ and $H_2$. We define their code parameters as $[n_1,k_1,d_1]$ and $[n_2,k_2,d_2]$ and assume without loss of generality that they are nontrivial: $k_1,k_2>0$. Similarly, let $F_1$ and $F_2$ be generating matrices for codes $\mathcal{C}_1^T$ and $\mathcal{C}_2^T$ with code parameters $[n_1^T,k_1^T,d_1^T]$ and $[n_2^T,k_2^T,d_2^T]$. Either transpose code may be trivial, in which case we define its distance to be infinite.

Using this notation, the hypergraph product of $H_1$ and $H_2$ is a quantum code $\text{HGP}(H_1,H_2)$ defined by the following sets of stabilizers, divided into $X$-type and $Z$-type,
\begin{alignat}{3}\label{eq:ShgpX}
\mathcal{S}^{(\text{hgp})}_X&=\big\{X^{(L)}(S)X^{(l)}(T):&SH_2^T&=H_1^TT,\\\nonumber
&&G_1S&=0,TF_2^T=0\big\},\\\label{eq:ShgpZ}
\mathcal{S}^{(\text{hgp})}_Z&=\big\{Z^{(L)}(S)Z^{(l)}(T):&H_1S&=TH_2,\\\nonumber
&&SG_2^T&=0,F_1T=0\big\}.
\end{alignat}

To show these stabilizers commute, let $M=X^{(L)}(S)X^{(l)}(T)\in\mathcal{S}^{(\text{hgp})}_X$ and $M'=Z^{(L)}(S')Z^{(l)}(T')\in\mathcal{S}^{(\text{hgp})}_Z$. By Eq.~\eqref{eq:lattice_Pauli_commute}, we need to show $\tr(S^TS')+\tr(T^TT')=0$. Notice that $G_1S=0$ demands that columns of $S$ are parity checks for $\mathcal{C}_1$. In other words, there exists $A$ such that $S=H_1^TA$. Likewise, because $TF_2^T=0$, rows of $T$ are parity checks for $\mathcal{C}_2^T$, or, equivalently, there exists $B$ such that $T=BH_2^T$. Finally, the same reasoning holds for $S'$ and $T'$, showing the existence of $A'$ and $B'$ such that $S'=A'H_2$ and $T'=H_1B'$. Since $SH_2^T=H_1^TT$ and $H_1S'=T'H_2$, we have $H_2A^TH_1=H_2B^TH_1$ and $H_1A'H_2=H_1B'H_2$. Putting it all together we have
\begin{align}
\tr(S^TS')&=\tr(A^TH_1A'H_2)\\
&=\tr(B^TH_1B'H_2)=\tr(T^TT'),
\end{align}
completing the proof.

In Appendix~\ref{app:hypergraph_product_codes}, we connect this description of the hypergraph product code with the original definition, and derive other relevant properties. We note here that logical operators for $\text{HGP}(H_1,H_2)$ are
\begin{align}\label{eq:LhgpX}
\mathcal{L}_X^{(\text{hgp})}&=\{X^{(L)}(S)X^{(l)}(T):SH_2^T=H_1^TT\},\\\label{eq:LhgpZ}
\mathcal{L}_Z^{(\text{hgp})}&=\{Z^{(L)}(S)Z^{(l)}(T):H_1S=TH_2\},
\end{align}
and it has code parameters $\llbracket N,K,D\rrbracket$ \cite{Tillich2014}:
\begin{align}\label{eq:hgpN}
N&=n_1n_2+n_1^Tn_2^T,\\\label{eq:hgpK}
K&=k_1k_2+k_1^Tk_2^T,\\\label{eq:hgpD}
D&=\bigg\{\begin{array}{lr}\min(d_1,d_2),&k_1^T=0\text{ or }k_2^T=0\\\min(d_1,d_2,d_1^T,d_2^T),&\text{otherwise}\end{array}.
\end{align}
Lastly, if $H_1$ is a $(b_1,c_1)$-LDPC set of parity checks and $H_2$ is a $(b_2,c_2)$-LDPC set of parity checks, then $\text{HGP}(H_1,H_2)$ is $(\beta,\gamma)$-LDPC for \cite{Tillich2014}
\begin{align}\label{eq:hgpBeta}
\beta&=\max(b_1+b_2,c_1+c_2),\\\label{eq:hgpGamma}
\gamma&=\max(c_1+b_2,b_1+c_2).
\end{align}

\section{Constructing and decoding optimal 2-dimensional subsystem codes}\label{sec:constructing_and_decoding}
In this section, we show how to make BBS codes from two classical codes. We note that using good classical codes leads to optimal scaling of the quantum code parameters and show how classical decoders are used to decode the quantum codes.

\subsection{Bravyi-Bacon-Shor codes from classical codes}
In Section~\ref{sec:BBS} we noted that an $\llbracket N,K,D\rrbracket$ BBS code specified by matrix $A\in\mathbb{F}_2^{n_1\times n_2}$ defines two classical codes $\mathcal{C}_1=\col{A}$ and $\mathcal{C}_2=\row{A}$, and that, if those classical codes have parameters $[n_1,k,d_1]$ and $[n_2,k,d_2]$, we have code parameter relations $K=k$ and $D=\min(d_1,d_2)$. The goal now is to explore the converse: given two classical codes $\mathcal{C}_1$ and $\mathcal{C}_2$, how should we construct a BBS code with the same relations in code parameters?

Suppose that the classical codes have generating matrices $G_1\in\mathbb{F}_2^{k\times n_1}$ and $G_2\in\mathbb{F}_2^{k\times n_2}$. We then construct the code $\text{BBS}(A)$ with
\begin{equation}
A=G_1^TQG_2\in\mathbb{F}_2^{n_1\times n_2},
\end{equation}
where $Q$ is any full-rank $k\times k$ matrix representing the non-uniqueness of the generating matrices. Adjusting $Q$ can change the number of physical qubits in the code.

Now notice that
\begin{align}
\col{A}&=\{G_1^TQG_2\vec x:\vec x\in\mathbb{F}_2^{n_2}\}\\
&=\{G_1^TQ\vec y:\vec y\in\mathbb{F}_2^k\}\\
&=\{G_1^T\vec z:\vec z\in\mathbb{F}_2^k\}\\
&=\col{G_1^T}=\row{G_1}=\mathcal{C}_1.
\end{align}
The second equality relies on $G_2$ being full-rank and the third on $Q$ being full-rank. Likewise, similar reasoning shows that $\row{A}=\mathcal{C}_2$.

Therefore, we have the following theorem.
\begin{thm}\label{thm:BBS_from_classical}
For all full-rank $Q\in\mathbb{F}_2^{k\times k}$ and every two classical codes $\mathcal{C}_1$, $\mathcal{C}_2$ with parameters $[n_1,k,d_1]$, $[n_2,k,d_2]$ and generating matrices $G_1\in\mathbb{F}_2^{k\times n_1}$, $G_2\in\mathbb{F}_2^{k\times n_2}$, let $A=G_1^TQG_2$. Then $\text{BBS}(A)$ is an $\llbracket N,K,D\rrbracket$ quantum subsystem code and $\text{aBBS}(A)$ an $\llbracket N_{2d},K,D\rrbracket$ subsystem code local in 2-dimensions with
\begin{align}
\min(n_1d_2,d_1n_2)\le N&\le n_1n_2,\\
n_1n_2\le N_{2d}&\le 2n_1n_2-\min(n_1d_2,d_1n_2),\\
K&=k,\\
D&=\min(d_1,d_2).
\end{align}
\end{thm}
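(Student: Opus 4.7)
The overall strategy is to reduce the theorem to the two preceding BBS theorems (Theorem~\ref{thm:Bravyi-Bacon-Shor} and Theorem~\ref{thm:augmented_Bravyi-Bacon-Shor}) by verifying that $A = G_1^T Q G_2$ has column and row spaces equal to $\mathcal{C}_1$ and $\mathcal{C}_2$ respectively. In fact the paragraphs immediately preceding the theorem statement already establish $\col{A} = \mathcal{C}_1$ (using the full-rankness of both $G_2$ and $Q$ to collapse $\{G_1^T Q G_2 \vec{x} : \vec{x}\in\mathbb{F}_2^{n_2}\}$ down to $\col{G_1^T} = \row{G_1} = \mathcal{C}_1$), and by symmetry $\row{A} = \mathcal{C}_2$. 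This identification is the crux of the construction, and it is already in hand, leaving the rest of the proof largely bookkeeping.

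From here $K$ and $D$ fall out immediately. Theorem~\ref{thm:Bravyi-Bacon-Shor} gives $K = \rank{A}$; since $\col{A} = \mathcal{C}_1$ has dimension $k$, we get $K = k$. The same theorem gives
\begin{equation*}
D \;=\; \min\bigl\{|\vec y| > 0 : \vec y \in \row{A} \cup \col{A}\bigr\} \;=\; \min\bigl\{|\vec y| > 0 : \vec y \in \mathcal{C}_1 \cup \mathcal{C}_2\bigr\} \;=\; \min(d_1, d_2),
\end{equation*}
by definition of the two classical distances. Theorem~\ref{thm:augmented_Bravyi-Bacon-Shor} then yields the same $K$ and $D$ for the augmented code $\text{aBBS}(A)$, which is local in 2-dimensions by construction.

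What remains is to bound the qubit counts $N = |A|$ and $N_{2d} = 2n_1 n_2 - |A|$. This is immediate from Eq.~\eqref{eq:Aweight_bounds}: assuming no all-zero row or column in $A$, we have $\min(D_X n_1, D_Z n_2) \le |A| \le n_1 n_2$, which with $D_X = d_2$ and $D_Z = d_1$ becomes $\min(n_1 d_2, d_1 n_2) \le |A| \le n_1 n_2$; substituting into $N = |A|$ and $N_{2d} = 2n_1 n_2 - |A|$ yields the four stated bounds. The one subtle point is the ``no zero row/column'' hypothesis: we may assume WLOG that $G_1$ and $G_2$ have no zero columns (such a column is a trivially-dead coordinate that can be dropped without changing $k$ or $d_j$), and combined with $Q$ full-rank and $G_1^T, G_2^T$ injective on $\mathbb{F}_2^k$, every row and column of $A$ is then nonzero. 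Beyond this minor technicality there is no real obstacle in the proof---the argument is essentially pure assembly of earlier results, with the one nontrivial computation (that $\col{A}$ and $\row{A}$ give back $\mathcal{C}_1$ and $\mathcal{C}_2$) already carried out in the discussion preceding the theorem.
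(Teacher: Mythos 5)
Your proposal is correct and follows essentially the same route as the paper: establish $\col{A}=\mathcal{C}_1$ and $\row{A}=\mathcal{C}_2$ via the full-rankness of $Q$ and the generating matrices, then read off $K$ and $D$ from Theorems~\ref{thm:Bravyi-Bacon-Shor} and \ref{thm:augmented_Bravyi-Bacon-Shor} and the qubit-count bounds from Eq.~\eqref{eq:Aweight_bounds}. Your explicit justification that $A$ has no all-zero rows or columns (via the injectivity of $G_1^T$, $G_2^T$ and full-rankness of $Q$) is a small point the paper handles only with a ``WLOG'' remark, and is a welcome addition.
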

\noindent The lower bound on $N$ and upper bound on $N_{2d}$ are provided by Eq.~\eqref{eq:Aweight_bounds}.

Before discussing the theorem's implications, let us briefly present some examples, starting with the Bacon-Shor code.
\begin{example}\label{ex:bacon_shor}
Let $\mathcal{C}_1=\mathcal{C}_2=\mathcal{C}_R$ be the $[n,1,n]$ repetition code (see Eq.~\eqref{eq:rep_code}). Then $A=G_R^TG_R=\mathbbm{1}$ (the all $1$s matrix) represents a Bravyi-Bacon-Shor with a qubit at every lattice site, $X$-type ($Z$-type) gauge operators between pairs of qubits in the same column (row), and $X$-type ($Z$-type) stabilizers that span pairs of rows (columns). That is, we have reconstructed the Bacon-Shor code \cite{Bacon2006}.
\end{example}

\begin{example}
The $[7,4,3]$ Hamming code is generated by
\begin{equation}\nonumber
G=\left(\begin{array}{ccccccc}
1&0&0&0&1&1&0\\
0&1&0&0&1&0&1\\
0&0&1&0&0&1&1\\
0&0&0&1&1&1&1
\end{array}\right)\text{\hspace{-2pt}},\text{\hspace{1pt}}
H=\left(\begin{array}{ccccccc}
1&1&0&1&1&0&0\\
1&0&1&1&0&1&0\\
0&1&1&1&0&0&1
\end{array}\right).
\end{equation}
Let $A=G^TQG$ for full-rank $4\times4$ matrix $Q$. Taking $Q=I$ gives a $\llbracket25,4,3\rrbracket$ Bravyi-Bacon-Shor code:
\begin{equation}\nonumber
A=\left(\begin{array}{ccccccc}
1&0&0&0&1&1&0\\
0&1&0&0&1&0&1\\
0&0&1&0&0&1&1\\
0&0&0&1&1&1&1\\
1&1&0&1&1&0&0\\
1&0&1&1&0&1&0\\
0&1&1&1&0&0&1
\end{array}\right).
\end{equation}
Alternatively, taking $Q=\left(\begin{smallmatrix}0&0&1&0\\0&1&0&1\\1&0&0&0\\0&1&0&0\end{smallmatrix}\right)$ minimizes the number of qubits, giving a $\llbracket21,4,3\rrbracket$ Bravyi-Bacon-Shor code. 
\end{example}

In \cite{Bravyi2011}, Bravyi shows that for any family of $\llbracket N,K,D\rrbracket$ quantum subsystem codes local in 2-dimensions $KD=O(N)$. He then provides a nonconstructive argument that families of aBBS codes exist that saturate this bound. Theorem~\ref{thm:BBS_from_classical} elucidates this existence proof by connecting it to the classical case. If we have a family of good $[n,k,d]$ classical codes -- i.e.~there are constants $\alpha,\beta$ such that for all $n$, $k\ge\alpha n$ and $d\ge\beta n$ -- then the aBBS code family created from Theorem~\ref{thm:BBS_from_classical} satisfies $KD=\alpha\beta n^2\ge\alpha\beta N_{2d}/2$. So the aBBS codes created this way saturate Bravyi's bound.

Moreover, Theorem~\ref{thm:BBS_from_classical} provides the means to elevate the aforementioned nonconstructive proof to an explicit constructive proof. One only needs an explicit construction of good classical codes. Such constructions exist, e.g.~expander codes. We review these classical codes in detail in Appendix~\ref{app:expander_codes_construction}.

Finally, we should point out that although Theorem~\ref{thm:BBS_from_classical} produces BBS and aBBS codes for which $K,D=O(\sqrt{N})$, it can be used to trade off $K$ and $D$. Bravyi and Terhal \cite{Bravyi2009} have shown that $D=O(\sqrt{N})$ for subsystem codes in 2-dimensions. So, assume that we would like a code family with $K=\alpha N^{1-a}$ and $D=\beta N^a$ for some constants $a\le1/2$, $\alpha$, and $\beta$. To construct this code family, use Theorem~\ref{thm:BBS_from_classical} and a good family of classical codes to make a quantum code family with $K=\alpha N^a$, $D=\beta N^a$, and $O(N^{2a})$ physical qubits. Take $N^{1-2a}$ copies of this family to make the desired family with $K=\alpha N^{1-2a}N^a=\alpha N^{1-a}$, $D=\beta N^a$, and $N^{1-2a}O(N^{2a})=O(N)$ physical qubits.

If, for whatever reason, a family with parameters $KD=o(N)$ is desired (i.e.~$KD$ scales strictly less than $N$), then one can take a family with $K'D=\Theta(N)$ and ignore a fraction $1-K/K'$ of the encoded qubits. See Fig.~\ref{fig:code_region} for a summary of the last two paragraphs.

\begin{figure}
\centering
\includegraphics[width=0.75\columnwidth]{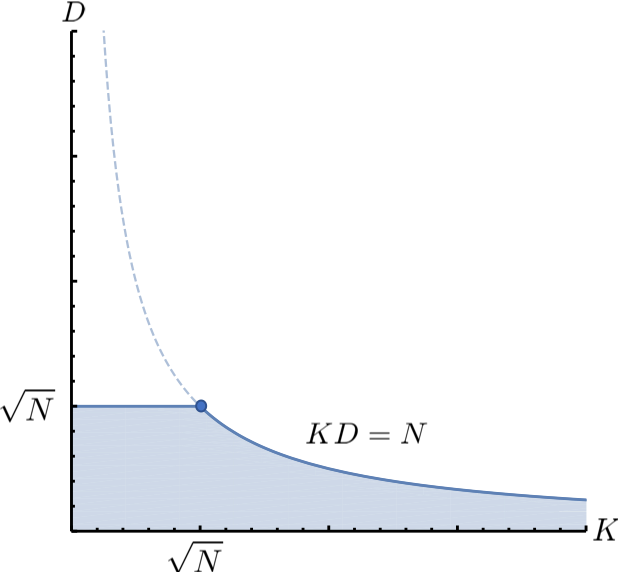}
\caption{\label{fig:code_region} The region of possible \cite{Bravyi2009,Bravyi2011} 2-dimensional subsystem $\llbracket N,K,D\rrbracket$ code families. We can construct families at all these points by appealing to constructions of good classical codes and using Theorem~\ref{thm:BBS_from_classical}.}
\end{figure}

\subsection{Decoding BBS codes}
Correcting errors on a quantum subsystem code involves (1) measuring a generating set of gauge operators, (2) reconstructing the values of the stabilizers from the results, and (3) applying a Pauli correction. An advantage of the BBS or aBBS codes is that the generating set of gauge operators includes only 2-qubit operators (see Eq.~\ref{eq:g2d}) despite the stabilizers being high-weight. In this section, we identify another convenient feature -- the correction in the third step can be calculated by decoders for the corresponding classical codes.

Let us begin by making some assumptions about the error model and codes. While not essential to the main conclusions, these assumptions simplify the discussion. Regarding the error model, we assume that each qubit suffers an $X$ error with probability $q$ and, independently, a $Z$ error with the same probability $q$. Two-qubit Pauli measurements (e.g.~of the gauge operators) are assumed to fail with probability $q'$. Regarding the codes, we assume that $A$ is an $n\times n$ symmetric matrix, so there is just one $[n,k,d]$ classical code $\mathcal{C}=\row{A}=\col{A}$ under consideration. We let $H$ and $G$ be the parity check and generating matrices of this code.

What is essential to our conclusions here is the existence of a decoding algorithm $\mathcal{D}$ for the classical code $\mathcal{C}$ of the following form. Decoder $\mathcal{D}$ takes as input faulty parity check information from a faulty codeword $\vec i=H(\vec w+\vec e)+\vec f$, where $\vec w\in\mathcal{C}$, $\vec e$ represents data errors, and $\vec f$ represents errors in the ``measurement" of the parity checks (though a more appropriate classical terminology might be the ``calculation" of the checks). The decoder's task is then to find a recovery $\pvec e'=\mathcal{D}(\vec i)$ that is close to $\vec e$, and update the classical state from $\vec w+\vec e$ to $\vec w+\vec e+\pvec e'$. This process is repeated some number of rounds, alternating the application of random noise $\vec e$ and $\vec f$ with decoding. Afterwards, we imagine ``ideal" decoding with $\vec f=\vec 0$ is performed, and if $\vec w$ is not the final state, then the error-correction has failed. Failure (after the given number of rounds) occurs with some probability $\bar p$, which is a function of the probability distribution of errors $\vec e$ and $\vec f$.

Generally, in classical error-correction, measurement of the parity checks is considered perfect, and so decoders with the required capability of dealing with measurement errors are seldom created or used. However, expander codes do have a suitable decoder, the flip decoder, which is discussed in Appendix~\ref{app:expander_codes_decoding}.

We discuss the decoding of (symmetric, $A=A^T$) BBS codes in detail, then in the next section briefly discuss the case of (symmetric) aBBS codes, which is similar. Our reasoning hinges on associating the stabilizers of the BBS code with parity checks of the classical code $\mathcal{C}$ and the dressed logical operators of the BBS code with the codewords of $\mathcal{C}$.

To realize these associations, we rewrite $\mathcal{S}_X^{(\text{bbs})}$ from Eq.~\eqref{eq:SbbsX}. Let $X(S\cap A)\in\mathcal{S}_X^{(\text{bbs})}$. Since $SH_R^T=0$, rows of $S$ are codewords of $\mathcal{C}_R$, either all 1s or all 0s. Because $GS=0$, columns of $S$ are parity checks of $\mathcal{C}$. Therefore, $S\cap A=\diag{\vec r}A$ for some $\vec r\in\row{H}$. We have
\begin{equation}\label{eq:rewrite_SbbsX}
\mathcal{S}^{(\text{bbs})}_X=\{X\left(\diag{\vec r}A\right):\vec r\in\row{H}\}.
\end{equation}
Similarly, rewrite $\mathcal{S}_Z^{\text{(bbs)}}$ as
\begin{equation}\label{eq:rewrite_SbbsZ}
\mathcal{S}^{(\text{bbs})}_Z=\{Z\left(A\hspace{2pt}\diag{\vec c}\right):\vec c\in\row{H}\}.
\end{equation}
Thus, the parity checks of the classical code indicate which sets of rows or columns constitute a stabilizer.

Dressed logical operators $\hat{\mathcal{L}}_X^{(\text{bbs})}$ are exactly those $X$-type Paulis that commute with all the $Z$-type stabilizers. But because $Z$-type stabilizers are supported on entire columns of $A$, they are only sensitive to whether an even or odd number of Pauli $X$ errors occurred within a column. Indeed, single qubit $X$ errors within a column are equivalent up to gauge operators. Say that a column is odd if it contains an odd number of $X$ errors. An $X$-type operator commutes with all the $Z$-type stabilizers if and only if it consists of odd columns corresponding to a codeword $\vec w\in\mathcal{C}$, i.e.~column $i$ is odd if and only if $\vec w_i=1$. In other words, the even or oddness of a column corresponds to the 0 or 1 state of an effective classical bit of the code $\mathcal{C}$. Symmetry of $A$ dictates that the same correspondence holds for $Z$-type dressed logical operators $\hat{\mathcal{L}}_Z^{(\text{bbs})}$ and $Z$ errors in rows.

The upshot of the previous paragraph is that to decode a $\text{BBS}(A)$ code, we may collect $X$- or $Z$-type stabilizer information $\vec\sigma$, run the classical decoder $\pvec e'=\mathcal{D}(\vec\sigma)$, and apply a $Z$- or $X$-type Pauli correction to a single qubit in each row or column indicated by $\pvec e'$. We call this the decoder induced by $\mathcal{D}$, or simply the \emph{induced decoder} for $\text{BBS}(A)$. To evaluate how well the induced decoder works, we just need to map the quantum errors to the effective classical errors that the decoder $\mathcal{D}$ sees.
 
The probability that an odd number of $X$ errors occurs within column $i$ containing $c_i$ qubits is
\begin{equation}\label{eq:mapped_bit_errors}
p_i=\sum_{\substack{l=1\\l\text{ odd}}}^{c_i}\binom{c_i}{l}q^l(1-q)^{r_i-l}=\frac12\left(1-(1-2q)^{c_i}\right).
\end{equation}
By symmetry, this situation is the same for $Z$ errors in the rows. So $p_i$ is the probability that bit $i$ has flipped in the classical code.

Similarly, stabilizers of the Bravyi-Bacon-Shor code are the product of several two-qubit gauge operators. For instance, there is an $Z$-type stabilizer $Z(A\hspace{2pt}\diag{\vec h_j})$ for row $\vec h_j$ of $H$, and it is made of $c_j'=|A\hspace{2pt}\diag{\vec h_j}|/2\le n|\vec h_j|/2$ two-qubit gauge measurements. The probability this stabilizer measurement is incorrect depends only on whether an even or odd number of its constituent gauge measurements are incorrect:
\begin{equation}\label{eq:mapped_meas_errors}
p_j'=\sum_{\substack{l=1\\l\text{ odd}}}^{c_j'}\binom{c_j'}{l}q^l(1-q')^{c'_j-l}=\frac12\left(1-(1-2q')^{c'_j}\right).
\end{equation}
By symmetry, this situation is identical for the $X$-type stabilizers. Thus, $p_j'$ is the probability that the parity check calculation for parity check $j$ is incorrect.

These relations between quantum and classical errors give us the following lemma.
\begin{lem}\label{lem:induced_decoder}
Say that using decoder $\mathcal{D}$ on the classical error model in which data errors have probabilities $p_i$ and parity check errors have probabilities $p_j'$ results in a logical error rate of $\bar p(p_i,p_j')$. The induced decoder with respect to $\mathcal{D}$ on an error model in which qubits fail with independent $X$ or $Z$ errors with probability $q$ and two-qubit Pauli measurements fail with probability $q'$ has a logical error rate
\begin{equation}\label{eq:induced_error}
\bar q(q,q')\le 2\bar p(p_i,p_j')
\end{equation}
where $p_i$ and $p_j'$ are given by Eqs.~\eqref{eq:mapped_bit_errors} and \eqref{eq:mapped_meas_errors}.
\end{lem}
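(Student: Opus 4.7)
The plan is to decompose the induced-decoder failure probability into separate $X$- and $Z$-correction failures via a union bound, and then reduce each to exactly the classical failure probability $\bar p$ promised by $\mathcal{D}$. This exploits the symmetry assumption $A = A^T$, under which decoding $X$ errors (using $Z$-type stabilizers) is structurally identical to decoding $Z$ errors (using $X$-type stabilizers), and both collapse to a single classical decoding instance on $\mathcal{C}$.

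First I would perform the union bound: a logical failure of the induced decoder occurs precisely when the residual $X$-error or the residual $Z$-error, after applying the chosen Pauli correction, escapes the gauge group into a nontrivial coset of $\hat{\mathcal{L}}_X^{(\text{bbs})}$ or $\hat{\mathcal{L}}_Z^{(\text{bbs})}$. Writing $\bar q_X$ and $\bar q_Z$ for the two individual failure probabilities gives immediately $\bar q(q,q') \le \bar q_X(q,q') + \bar q_Z(q,q')$. The factor of $2$ in \eqref{eq:induced_error} will then come from showing each of these equals $\bar p(p_i,p_j')$.

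Next I would reduce the $X$-error-correction problem to classical decoding of $\mathcal{C}$. Using \eqref{eq:rewrite_SbbsZ}, each $Z$-type stabilizer corresponds to a parity check $\vec h_j$ of $\mathcal{C}$ and is supported on full columns of $A$; hence the $Z$-syndrome depends only on the parity of the physical $X$-error configuration within each column. Because single-qubit $X$s within a column are gauge-equivalent via $\mathcal{G}^{(\text{bbs})}_X$, cosets of $\mathcal{G}^{(\text{bbs})}_X$ inside $\hat{\mathcal{L}}_X^{(\text{bbs})}$ are in natural bijection with cosets of $\mathcal{C}$ in $\mathbb{F}_2^n$ (via column parity), so the induced decoder lands in the correct gauge coset exactly when $\mathcal{D}$ lands in the correct classical coset. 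The marginal error on effective bit $i$ is the parity of the independent $q$-Bernoulli $X$s on the $c_i$ qubits of column $i$, equal to $p_i$ by \eqref{eq:mapped_bit_errors}; the marginal error on effective parity check $j$ is the parity of the $c_j'$ independent $q'$-Bernoulli gauge-measurement failures comprising that stabilizer, equal to $p_j'$ by \eqref{eq:mapped_meas_errors}. This yields $\bar q_X(q,q') \le \bar p(p_i,p_j')$, and by the symmetry $A = A^T$ together with the $X \leftrightarrow Z$ duality (using \eqref{eq:rewrite_SbbsX} and the argument for $\hat{\mathcal{L}}_Z^{(\text{bbs})}$ with rows and columns exchanged), also $\bar q_Z(q,q') \le \bar p(p_i,p_j')$. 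Adding the two yields \eqref{eq:induced_error}.

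The main obstacle is treating the joint distribution of effective classical errors carefully. The per-column data errors are genuinely independent because they arise from disjoint sets of physical qubits, but the per-stabilizer syndrome errors can be correlated whenever two stabilizers' decompositions into two-qubit gauge operators share a measurement; one must either fix decompositions whose supports are disjoint, or verify that $\bar p$ as defined is stated for (or upper-bounds) the induced, possibly correlated, distribution with the given marginals. A secondary subtlety is matching the ``ideal'' final-round classical decoding implicit in $\bar p$ to the quantum logical-failure convention, which again follows from the $\hat{\mathcal{L}}/\mathcal{G}$ coset correspondence described above.
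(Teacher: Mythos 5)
Your proposal is correct and follows essentially the same route as the paper: the paper obtains the lemma from the preceding discussion, which identifies column (row) parities with effective classical bits and stabilizers with parity checks via Eqs.~\eqref{eq:rewrite_SbbsX}--\eqref{eq:rewrite_SbbsZ}, computes the marginals \eqref{eq:mapped_bit_errors} and \eqref{eq:mapped_meas_errors} exactly as you do, and attributes the factor of $2$ to decoding $X$ and $Z$ errors separately. Your remark about possible correlations among syndrome errors when stabilizer decompositions share gauge measurements is a genuine subtlety that the paper passes over silently, so flagging it is a point in your favor rather than a defect.
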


The factor of two in Eq.~\eqref{eq:induced_error} results from the $X$ and $Z$ errors being decoded separately. Independent $X$, $Z$ noise is of course not critical to the lemma. For depolarizing noise for example, in which Pauli $X$, $Y$, or $Z$ errors occur with equal probability $q/3$, the logical error rate is at most $\bar q(2q/3,q')$ since $2q/3$ is the probability of a $Z$ or $X$ error. On the other hand, the induced decoder does discount the correlations in $X$ and $Z$ noise, so is not expected to be optimal in this case.

Also crucial to note is that for small, constant $q$ and $q'$, $p_i\approx c_iq$ and $p_j'\approx c_j'q'$. Because $c_i,c_j'\ge d$, the effective classical error rates increase at least proportionally to the code distance. In the limit of large code size and distance, no classical code can be expected to correct such noise, and thus this shows the lack of asymptotic threshold for BBS codes. Nevertheless, the lemma indicates a close connection between the quantum and classical error rates. If a classical code has a ``useful" (e.g.~order $10^{-a}$ for some moderately large $a$) logical error rate for $p_i<p$ and $p_j'<p'$, then the quantum code has a useful (i.e.~order $10^{-a}$) logical error rate for $q<p/(\max_ic_i)$ and $q'<p'/(\max_jc_j')$.

Lemma~\ref{lem:induced_decoder} indicates two ways to improve the decoding of BBS codes, even before tailoring to the noise. The first, more obvious way, is to find better decoders for the constituent classical codes. This is of course subject to the constraint that these classical decoders can tolerate measurement noise, which we noted previously is nonstandard but attainable for expander codes for example.

The second way to improve decoding is by reducing the values of $c_i$ (the number of qubits in row or column $i$) and $c_j'$ (the number of gauge-operators making up stabilizer $j$). This correlates roughly with minimizing $|A|$, the number of qubits in the BBS code, which can be done without change in the code parameters by appropriate choice of $Q$ in Theorem~\ref{thm:BBS_from_classical}.

Finally, let us discuss the time complexity of an induced decoder. This can be broken down into two parts: (1) the time it takes to acquire the stabilizer values that are input to $\mathcal{D}$ and (2) the time it takes to run $\mathcal{D}$ twice, once for $X$-stabilizers, once for $Z$. A particular stabilizer corresponding to a weight-$w$ parity check is the sum of $O(wn)$ two-qubit measurements and therefore takes $O(wn)$ time to compute. If $m$ stabilizer values are needed as input to the classical decoders, and the classical decoders run in time at most $t$, then induced decoding takes time $O(mwn+t)$. Using BBS codes constructed from classical expander codes as an example, the flip decoder $\mathcal{D}$ (see Appendix~\ref{app:expander_codes_decoding}) requires just $m=O(n)$ bits of input from weight $w=O(1)$ checks and runs in time $t=O(n)$. Thus, induced decoding takes time $O(n^2+n)=O(N)$, i.e.~linear in the size of the quantum code.

\subsection{Decoding aBBS codes}
In this subsection, we briefly discuss the decoding of (symmetric, $A=A^T$) aBBS codes assuming we can only measure operators in $\mathcal{G}_{2d}$, Eq.~\eqref{eq:g2d}, i.e.~two-qubit operators on neighboring qubits and some single-qubit measurements. We still advocate using the induced decoder of the previous section, but it is now more difficult to collect the stabilizer values from this restricted set of gauge operator measurements.

Similar to how we derived Eqs.~\eqref{eq:rewrite_SbbsX}, \eqref{eq:rewrite_SbbsZ}, we can rewrite the stabilizers of the aBBS codes to correspond to classical parity checks (recall, $\mathbbm{1}$ is the matrix of all 1s):
\begin{align}
\mathcal{S}^{(\text{aBBS})}_X&=\big\{X^{(L_2)}\left(\diag{\vec r}\mathbbm{1}\right):\vec r\in\row{H}\big\},\\
\mathcal{S}^{(\text{aBBS})}_Z&=\big\{Z^{(L_1)}\left(\mathbbm{1}\hspace{2pt}\diag{\vec c}\right):\vec c\in\row{H}\big\}.
\end{align}
This leads to similar conclusions about errors on the effective bits of the classical code $\mathcal{C}$. With the recognition that $c_i=n$ for all $i$, Eq.~\eqref{eq:mapped_bit_errors} still represents the probability of error for an effective classical bit.

As one may expect, because we have restricted what gauge operators may be measured to those in $\mathcal{G}_{2d}$, aBBS decoding also differs from BBS decoding in how eigenvalues of the stabilizers are calculated. If $\vec h_j$ is a row of $H$ and $S=Z^{(L_1)}(\mathbbm{1}\hspace{2pt}\diag{\vec h_j})$ is the corresponding stabilizer, then we should let $c_j'$ be the minimal number of elements of $\mathcal{G}_{2d}$ whose product is $S$. Since $S$ may include rows that are $O(n)$ distance apart, $c_j'$ may be as a large as $O(n^2)$. With this redefinition of $c_j'$ however, Eq.~\eqref{eq:mapped_meas_errors} again represents the probability of error for a parity check. Lemma~\ref{lem:induced_decoder} holds given these changes to $c_i$ and $c_j'$.

Now we discuss the runtime. Because $c_j'$ can be so large, we may be worried that it takes more time to decode, because ostensibly stabilizers corresponding to even just constant-weight parity checks may be the sum of as many as $O(n^2)$ elements of $\mathcal{G}_{2d}$ (and note that $|\mathcal{G}_{2d}|=O(n^2)$). However, a simple application of dynamic programming solves this. Suppose that we measure all two-body $Z$-gauge operators and get values $m_{ij}\in\{0,1\}$ corresponding to positions $(i,j)$ in the lattice. We can sweep across the lattice calculating the cumulative values across rows
\begin{equation}
M_{ij}=\sum_{l=1}^jm_{il}
\end{equation}
using just $O(n^2)=O(N)$ time. $Z$-type stabilizers corresponding to constant-weight parity checks are once again the sum of $O(n)$ of the $M_{ij}$ as well as $O(n)$ single-qubit measurements. Symmetry dictates the same is true for $X$-type stabilizers. Therefore, for example, the induced decoder with respect to the flip decoder for aBBS codes constructed from classical expander codes can still be implemented in linear time.

\section{Gauge-fixing}\label{sec:gauge_fixing}
In this section, we show that an aBBS code can be gauge-fixed to the corresponding BBS code and to certain hypergraph product codes. We begin, however, by defining gauge-fixing in general.

\subsection{Definition}
As we discussed in Section~\ref{sec:subsystem}, one way to think about subsystem codes is that in addition to the logical qubits encoded in the code, there are additional encoded qubits, the gauge qubits, which we do not care about protecting. In fact, the logical operators for these gauge qubits may be very low weight -- they are the gauge operators that we measure to perform error-correction.

The existence of gauge qubits, however, leads us to imagine a family of related codes in which some or all of the gauge qubits are fixed to some stabilizer state $\ket{\psi_g}$. In these related codes, called gauge-fixings, we have removed some or all of the gauge degrees of freedom by removing operators from the gauge group that do not stabilize $\ket{\psi_g}$. Generally, this makes error-correction more difficult -- a generating set for the new gauge group may necessarily contain higher weight operators -- but by reducing the size of the group of dressed logical operators, the environment has fewer ways to introduce logical errors to the data. This may even result in asymptotic error-correction thresholds in the gauge-fixed codes where none existed in the original subsystem code. A well-known example is the gauge-fixing of the Bacon-Shor code to the surface code \cite{Li2018}.



To discuss gauge-fixing in general, we use the following definition, using the notation from Section~\ref{sec:subsystem}.
\begin{defn}\label{defn:gauge_fixing}
We say that $\mathcal{G}'$ is a gauge-fixing of $\mathcal{G}$ if
\begin{enumerate}
\item $\mathcal{S}(\mathcal{G})\le\mathcal{S}(\mathcal{G}')\le\mathcal{G}'\le\mathcal{G}$
\item $K(\mathcal{G})=K(\mathcal{G}')$
\end{enumerate}
Generalizing the language slightly, we also say that a code $\mathcal{Q}'$ is a gauge-fixing of a code $\mathcal{Q}$ if their gauge groups are related appropriately.
\end{defn}

By the definition, a subsystem code and its gauge-fixing have the same total number of physical qubits and logical qubits. We can also say something about their code distances. 
\begin{lem}\label{lem:gauge_fixing_distance}
If $\mathcal{G}'$ is a gauge-fixing of $\mathcal{G}$, then $\hat{\mathcal{L}}(\mathcal{G}')\le\hat{\mathcal{L}}(\mathcal{G})$ and $D(\mathcal{G}')\ge D(\mathcal{G})$.
\end{lem}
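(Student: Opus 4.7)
The plan is to use the characterization of the dressed logical group as the centralizer of the stabilizer group, together with the order-reversing nature of centralizers under subgroup containment. Both claims then follow in short order from the defining chain $\mathcal{S}(\mathcal{G})\le\mathcal{S}(\mathcal{G}')\le\mathcal{G}'\le\mathcal{G}$ in Definition~\ref{defn:gauge_fixing}.

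First I would dispatch the containment $\hat{\mathcal{L}}(\mathcal{G}')\le\hat{\mathcal{L}}(\mathcal{G})$. Recall from Section~\ref{sec:subsystem} that $\hat{\mathcal{L}}(\mathcal{H})$ is by definition the centralizer of $\mathcal{S}(\mathcal{H})$. Taking centralizers is order-reversing: if $\mathcal{A}\le\mathcal{B}$ then every Pauli that commutes with all of $\mathcal{B}$ commutes with all of $\mathcal{A}$, so $C(\mathcal{B})\le C(\mathcal{A})$. Applying this to $\mathcal{S}(\mathcal{G})\le\mathcal{S}(\mathcal{G}')$ gives exactly $\hat{\mathcal{L}}(\mathcal{G}')=C(\mathcal{S}(\mathcal{G}'))\le C(\mathcal{S}(\mathcal{G}))=\hat{\mathcal{L}}(\mathcal{G})$.

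Next I would handle the distance bound. Let $P$ be a weight-minimizing nontrivial dressed logical of $\mathcal{G}'$, so that $P\in\hat{\mathcal{L}}(\mathcal{G}')\setminus\mathcal{S}(\mathcal{G}')$ and $|P|=D(\mathcal{G}')$. By the previous paragraph $P\in\hat{\mathcal{L}}(\mathcal{G})$, so it remains to show $P\notin\mathcal{S}(\mathcal{G})$: but if $P\in\mathcal{S}(\mathcal{G})$, then the hypothesis $\mathcal{S}(\mathcal{G})\le\mathcal{S}(\mathcal{G}')$ would force $P\in\mathcal{S}(\mathcal{G}')$, contradicting our choice of $P$. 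Hence $P$ is a nontrivial dressed logical of $\mathcal{G}$ as well, so $D(\mathcal{G})\le|P|=D(\mathcal{G}')$.

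The only subtlety is the precise convention used for the distance. The excerpt describes $D(\mathcal{G})$ informally as the lowest nonzero weight element of $\hat{\mathcal{L}}(\mathcal{G})$, but the meaningful notion is the minimum weight of a dressed logical that is not in $\mathcal{S}(\mathcal{G})$ (otherwise stabilizers, which act trivially on the code space, would spuriously drive the distance to small values); my argument uses this standard convention, and in either reading the containment step is the same. There is no real obstacle: the entire lemma is essentially the observation that enlarging the stabilizer (while keeping $K$ fixed) can only shrink the dressed logical group and can only promote former stabilizers to nontrivial logicals, both of which can only increase the distance.
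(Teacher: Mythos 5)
Your proof is correct, but it takes a genuinely different route from the paper's. You lean on the identity $\hat{\mathcal{L}}(\mathcal{G})=C(\mathcal{S}(\mathcal{G}))$ (the dressed logicals are the centralizer of the stabilizer), which the paper does assert in its list of derived groups in Section~\ref{sec:subsystem}, and then the containment is a one-line consequence of centralizers being order-reversing applied to $\mathcal{S}(\mathcal{G})\le\mathcal{S}(\mathcal{G}')$. Notably, your argument never uses condition (2) of Definition~\ref{defn:gauge_fixing}, i.e.\ $K(\mathcal{G})=K(\mathcal{G}')$. The paper instead works from the product form $\hat{\mathcal{L}}(\mathcal{G})=\mathcal{G}\,\mathcal{L}(\mathcal{G})$: it first shows $\mathcal{L}(\mathcal{G})\le\mathcal{L}(\mathcal{G}')$, then uses the $K$ equality to force $\mathcal{L}(\mathcal{G})/\mathcal{S}(\mathcal{G})=\mathcal{L}(\mathcal{G}')/\mathcal{S}(\mathcal{G}')$, and finally multiplies by $\mathcal{G}'\le\mathcal{G}$. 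The trade-off is this: your route is shorter and more general (it works for any enlargement of the stabilizer, regardless of whether $K$ is preserved), but it imports the nontrivial symplectic fact that the centralizer of the center equals $\mathcal{G}\,\mathcal{L}(\mathcal{G})$ — the identity $(V\cap V^\perp)^\perp=V+V^\perp$ in the $\mathbb{F}_2$ symplectic picture — whereas the paper's argument avoids invoking that equivalence and instead pays for it with the hypothesis $K(\mathcal{G})=K(\mathcal{G}')$. Your closing remark on the distance convention is also sound: under the paper's stated convention (minimum nonzero weight over all of $\hat{\mathcal{L}}$) the bound is immediate from set containment, and under the convention excluding stabilizers your contrapositive step ($P\in\mathcal{S}(\mathcal{G})\Rightarrow P\in\mathcal{S}(\mathcal{G}')$) correctly closes the gap.
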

\noindent We prove this fact in Appendix~\ref{app:gauge_fixing_distance}.

A concept more general than gauge-fixing is gauge-switching. If both $\mathcal{G}'$ and $\mathcal{G}''$ are gauge-fixings of $\mathcal{G}$, then one can move encoded logical information from $\mathcal{G}'$ to $\mathcal{G}''$ (or vice-versa) while keeping it protected with the stabilizers $\mathcal{S}(\mathcal{G}')\cap\mathcal{S}(\mathcal{G}'')\ge\mathcal{S}(\mathcal{G})$ and with code distance at least $D(\mathcal{G})$. Measuring the gauge group $\mathcal{G}''$, applying a correction based on the values of $\mathcal{S}(\mathcal{G})$ using a decoder for $\mathcal{G}$, and finally projecting onto the $+1$-eigenspaces of elements of $\mathcal{S}(\mathcal{G}'')-\mathcal{S}(\mathcal{G})$ using the appropriate elements of $\mathcal{G}$ achieves this information transfer.

\subsection{Gauge-fixing an aBBS code to a BBS code}
To warm up to Definition~\ref{defn:gauge_fixing}, we show that a BBS code specified by binary matrix $A$ is a gauge-fixing of the aBBS code specified by the same matrix. Of course, these codes do not have the same number of physical qubits, so to make the previous sentence precise we include ancilla qubits to the BBS code. This will be a common occurrence in our gauge-fixing theorems, and so we take a moment to discuss it.

Given a quantum code $\mathcal{Q}$, we will consider appending three types of ancillas: (1) qubits in the $\ket{+}$ state, (2) qubits in the $\ket{0}$ state, and (3) bare gauge qubits denoted $\ket{\perp}$. The new code that includes ancillas is written $\mathcal{Q}\ket{+^{m_+}}\ket{0^{m_0}}\ket{\perp^{m_g}}$ with the number of each type of ancilla indicated. Appending ancillas in this way extends the code's gauge group. Ancillas $\ket{+}$ indicate the inclusion of Paulis $X_i$ into the gauge group for each ancilla index $i$. Likewise, $\ket{0}$ ancillas indicate inclusion of $Z_i$. Bare gauge qubits $\ket{\perp}$ indicate inclusion of both $X_i$ and $Z_i$. 

Now we can formally state the relation between $\text{BBS}(A)$ and $\text{aBBS}(A)$.
\begin{thm}
For all binary matrices ${A\in\mathbb{F}_2^{n_1\times n_2}}$, $\mathcal{Q}'=\text{BBS}(A)\ket{+^{n_1n_2-|A|}}\ket{0^{n_1n_2-|A|}}$ is a gauge-fixing of $\mathcal{Q}=\text{aBBS}(A)$.
\end{thm}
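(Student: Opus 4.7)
The plan is to verify the two conditions of Definition~\ref{defn:gauge_fixing} after identifying the $2n_1n_2-|A|$ physical qubits of $\mathcal{Q}'$ with those of $\mathcal{Q}=\text{aBBS}(A)$: the $|A|$ qubits of the underlying BBS code with the type 0 sites of aBBS, the $n_1n_2-|A|$ ancillas prepared as $\ket{+}$ with the type 2 qubits, and the $n_1n_2-|A|$ ancillas prepared as $\ket{0}$ with the type 1 qubits. This is the natural identification because, by Eqs.~\eqref{eq:ggX_aBBS}--\eqref{eq:ggZ_aBBS} applied with $S=0$ or $T=0$ and a weight-one support for the other argument, single-qubit $X$'s on type 2 sites and single-qubit $Z$'s on type 1 sites already belong to the aBBS gauge group -- exactly the operators that appending $\ket{+}$ and $\ket{0}$ ancillas contributes to $\mathcal{G}'$.

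To check $\mathcal{G}'\le\mathcal{G}$ I would use that both gauge groups are CSS and verify inclusion on each sector on a set of generators. Every $X$-type generator of $\mathcal{G}'$ is either an element $X(S_0)\in\mathcal{G}^{(\text{bbs})}_X$ (which, under the identification, equals $X^{(L_1)}(S_0)X^{(L_2)}(0)$ with $S_0\subseteq A$ satisfying $G_RS_0=0$) or a single $X_i$ on a $\ket{+}$ ancilla (which equals $X^{(L_2)}(T)$ with $T$ a weight-one subset of $\mathbbm{1}-A$); in both cases the form matches Eq.~\eqref{eq:ggX_aBBS}. The $Z$-sector argument is symmetric via Eq.~\eqref{eq:ggZ_aBBS}.

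For $\mathcal{S}(\mathcal{G})\le\mathcal{S}(\mathcal{G}')$ I would take a generic $X^{(L_2)}(S)\in\mathcal{S}^{(\text{abbs})}_X$ from Eq.~\eqref{eq:SabbsX} (so $SH_R^T=0$ and $G_1S=0$) and split it into its contributions on type 0 and type 2 sites:
\begin{equation*}
X^{(L_2)}(S) \;=\; X(S\cap A)\cdot X^{(L_2)}(S\cap(\mathbbm{1}-A)).
\end{equation*}
The first factor is precisely the BBS $X$-stabilizer associated with the same $S$ in Eq.~\eqref{eq:SbbsX}, and the second is a product of $X_i$'s on $\ket{+}$ ancillas, each of which stabilizes its ancilla by construction. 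Hence their product lies in $\mathcal{S}(\mathcal{G}')$. The mirror decomposition using Eqs.~\eqref{eq:SabbsZ} and~\eqref{eq:SbbsZ} handles $Z$-type stabilizers. Finally, $K(\mathcal{G})=K(\mathcal{G}')=\rank{A}$ follows immediately from Theorems~\ref{thm:Bravyi-Bacon-Shor} and~\ref{thm:augmented_Bravyi-Bacon-Shor}, since appending $\ket{+}$ or $\ket{0}$ ancillas contributes only stabilizers and no new logical degrees of freedom.

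The one point demanding care is the shared-site convention: on type 0 sites one has $X^{(L_1)}_{ij}=X^{(L_2)}_{ij}$ (both refer to the same physical qubit), so a BBS operator can be written as either an $L_1$- or $L_2$-operator in the aBBS notation. The decomposition above uses exactly this identification, matching the convention by which the aBBS stabilizers in Eqs.~\eqref{eq:SabbsX}--\eqref{eq:SabbsZ} are written as supported entirely on one lattice. Beyond this bookkeeping, every verification reduces to matching the defining equations of the two gauge groups, so there is no real obstacle.
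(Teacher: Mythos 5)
Your proposal is correct and follows essentially the same route as the paper: identify the $\ket{+}$ and $\ket{0}$ ancillas with the type 2 and type 1 qubits, write both gauge groups and stabilizer groups on the shared lattices $L_1\cup L_2$, observe the inclusions $\mathcal{G}'\le\mathcal{G}$ and $\mathcal{S}(\mathcal{G})\le\mathcal{S}(\mathcal{G}')$ directly from the defining conditions, and note that appending ancillas leaves $K=\rank{A}$ unchanged. Your explicit splitting $X^{(L_2)}(S)=X(S\cap A)\cdot X^{(L_2)}(S\cap(\mathbbm{1}-A))$ just spells out the step the paper declares ``clear,'' so there is nothing to add.
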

\begin{proof}
We place both codes on the lattices $L_1$ and $L_2$ defined in Section~\ref{sec:aBBS} for the aBBS codes (recall, two $n_1\times n_2$ lattices that share qubits wherever $A_{ij}=1$). The gauge group of $\mathcal{Q}$ is defined in Eqs.~\eqref{eq:ggX_aBBS}, \eqref{eq:ggZ_aBBS}. For $\mathcal{Q}'$, however, we should rewrite the gauge group to fit on these two lattices and to include the ancillas. As one may suspect from their quantity, the $\ket{+}$ ancillas are the type 2 qubits (recall, those in $L_2$ but not in $L_1$) and $\ket{0}$ ancillas are the type 1 qubits (those in $L_1$ but not $L_2$).
\begin{alignat}{2}
\mathcal{G}^{(\mathcal{Q}')}_X&=\big\{X^{(L_1)}(S)X^{(L_2)}(T):&G_RS=0,S\subseteq A\\\nonumber
&&T\subseteq\mathbbm{1}-A\big\},\\
\mathcal{G}^{(\mathcal{Q}')}_Z&=\big\{Z^{(L_1)}(S)Z^{(L_2)}(T):&TG_R^T=0,T\subseteq A\\\nonumber
&&S\subseteq\mathbbm{1}-A\}.
\end{alignat}
Stabilizers of $\mathcal{Q}'$ include not just the stabilizers of $\text{BBS}(A)$, but also single-qubit Pauli $X$s on type 2 qubits and single-qubit Pauli $Z$s on the type 1 qubits. So we have
\begin{align}
\mathcal{S}^{(\mathcal{Q}')}_X&=\big\{X^{(L_2)}(S+T):G_1S=0,SH_R^T=0,T\subseteq\mathbbm{1}-A\big\},\\
\mathcal{S}^{(\mathcal{Q}')}_Z&=\big\{Z^{(L_1)}(S+T):SG_2^T=0,H_RS=0,T\subseteq\mathbbm{1}-A\big\}.
\end{align}

Now it is clear that
\begin{alignat}{3}
\mathcal{G}_X^{(\mathcal{Q}')}&\le\mathcal{G}_X^{(\text{aBBS})},\quad\mathcal{G}_Z^{(\mathcal{Q}')}\hspace{2pt}&\le&\hspace{2pt}\mathcal{G}_Z^{(\text{aBBS})},\\
\mathcal{S}_X^{(\text{aBBS})}&\le\mathcal{S}_X^{(\mathcal{Q}')},\quad\mathcal{S}_Z^{(\text{aBBS})}\hspace{2pt}&\le&\hspace{2pt}\mathcal{S}_Z^{(\mathcal{Q}')}.
\end{alignat}
This takes care of part (1) of Definition~\ref{defn:gauge_fixing}.

Adding ancillas does not change the number of logical qubits in $\mathcal{Q}'$, and so both $\mathcal{Q}'$ and $\mathcal{Q}$ have $\rank{A}$ logical qubits, showing part (2) of Definition~\ref{defn:gauge_fixing} holds.
\end{proof}

\subsection{Gauge-fixing an aBBS code to hypergraph product codes}
In this section, we show that certain hypergraph product codes are gauge-fixings of an aBBS code. Informally, our main result is that for all $A\in\mathbb{F}_2^{n_1\times n_2}$ both $\text{HGP}(H_R,H_2)$ and $\text{HGP}(H_1,H_R)$ are gauge-fixings of $\text{aBBS}(A)$, where we only require that the rows of $H_1\in\mathbb{F}_2^{n_1^T\times n_1}$ and $H_2\in\mathbb{F}_2^{n_2^T\times n_2}$ span $\ker(A^T)$ and $\ker(A)$, respectively.

Just like the case of a BBS code in the last section, to formalize this gauge-fixing we need to define all three of these codes on the same set of physical qubits. Four lattices of qubits are involved, which we label $L_1$, $L_2$, $l_1$, and $l_2$. The code $\text{aBBS}(A)$ is supported on the $n_1\times n_2$ lattices $L_1$ and $L_2$. Recall that qubits in $L_1$ and $L_2$ are identified at the positions where $A_{ij}=1$, so there are $2n_1n_2-|A|$ total qubits in $L_1\cup L_2$.
The code $\text{HGP}(H_R,H_2)$ is supported on lattices $L_1$ and $l_1$, thus making $l_1$ a $(n_1-1)\times n_2^T$ lattice. Similarly, the code $\text{HGP}(H_1,H_R)$ is supported on lattices $L_2$ and $l_2$, and so $l_2$ is a $n_1^T\times(n_2-1)$ lattice. A schematic of this qubit arrangement is shown in Fig.~\ref{fig:lattice_scheme}.

\begin{figure}
\centering
\includegraphics[width=0.45\columnwidth]{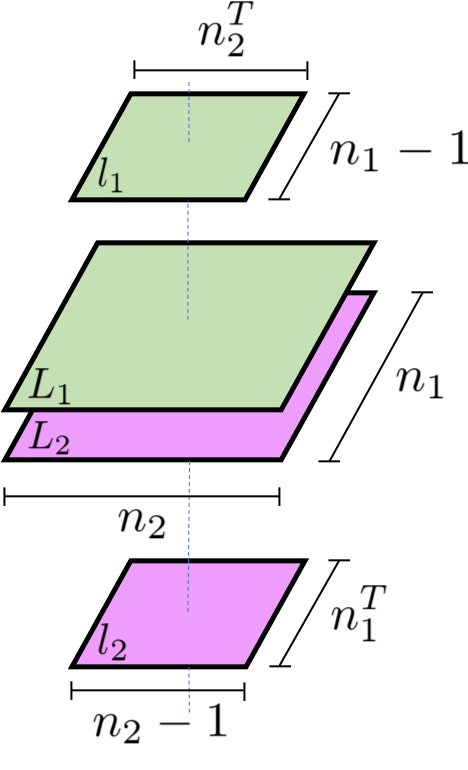}
\caption{\label{fig:lattice_scheme} Gauge-fixing an aBBS code to HGP codes takes place on four lattices of qubits. Each code involved is supported on two lattices -- $\text{aBBS}(A)$ is supported on $L_1$ and $L_2$, $\text{HGP}(H_R,H_2)$ on $L_1$ and $l_1$, and $\text{HGP}(H_1,H_R)$ on $L_2$ and $l_2$.}
\end{figure}

\begin{thm}\label{thm:aBBS_to_HGP}
Let $A\in\mathbb{F}_2^{n_1\times n_2}$ and $H_1\in\mathbb{F}_2^{n_1^T\times n_1}$, $H_2\in\mathbb{F}_2^{n_2^T\times n_2}$ be such that $\row{H_1}=\ker(A^T)$, $\row{H_2}=\ker(A)$. Then the codes
\begin{align}
\mathcal{Q}'&=\text{HGP}(H_{R},H_2)\ket{+^{n_1n_2-|A|}}\ket{\perp^{n_1^T(n_2-1)}},\\
\mathcal{Q}''&=\text{HGP}(H_1,H_{R})\ket{0^{n_1n_2-|A|}}\ket{\perp^{(n_1-1)n_2^T}}
\end{align}
are gauge-fixings of
\begin{equation}
\mathcal{Q}=\text{aBBS}(A)\ket{\perp^{n_1^T(n_2-1)+(n_1-1)n_2^T}}.
\end{equation}
\end{thm}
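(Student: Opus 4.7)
The plan is to verify the two conditions of Definition~\ref{defn:gauge_fixing} for $\mathcal{Q}'$; the argument for $\mathcal{Q}''$ is symmetric, swapping the roles of $L_1,l_1,H_1$ with $L_2,l_2,H_2$ and $X$ with $Z$. I would place all three codes on the common qubit set $L_1\cup L_2\cup l_1\cup l_2$ and enumerate each gauge group, treating the $\ket{+}$, $\ket{0}$, $\ket{\perp}$ ancillas as contributing single-qubit $X$, single-qubit $Z$, or both to the respective gauge groups. Before tackling the group-theoretic chain, I would isolate the two identities that drive the whole argument: $\row{H_1}=\ker(A^T)$ is equivalent to $H_1A=0$, and $\row{H_2}=\ker(A)$ is equivalent to $AH_2^T=0$.

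The counting condition is quick: Eq.~\eqref{eq:hgpK} with $k_R=1$ and $k_R^T=0$ gives $K(\mathcal{Q}')=k_2=n_2-\rank{H_2}$, and $\rank{H_2}=\dim\ker(A)=n_2-\rank{A}$, so $K(\mathcal{Q}')=\rank{A}=K(\mathcal{Q})$. For the chain $\mathcal{S}(\mathcal{Q})\le\mathcal{S}(\mathcal{Q}')\le\mathcal{G}(\mathcal{Q}')\le\mathcal{G}(\mathcal{Q})$, the middle inclusion is automatic. For $\mathcal{G}(\mathcal{Q}')\le\mathcal{G}(\mathcal{Q})$, ancilla generators (single-qubit $X$ on type-$2$ qubits and all single-qubit Paulis on $l_2$) land in $\mathcal{G}(\mathcal{Q})$ immediately, and the HGP $X$-stabilizer $X^{(L_1)}(S)X^{(l_1)}(T)$ factors as an aBBS $X$-gauge generator (using only $G_RS=0$) times a bare-gauge operator on $l_1$. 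The HGP $Z$-stabilizer $Z^{(L_1)}(S)Z^{(l_1)}(T)$ is more delicate: after peeling off $Z^{(l_1)}(T)$, I would rewrite $Z^{(L_1)}(S)=Z^{(L_1)}(S\cap(\mathbbm{1}-A))\cdot Z^{(L_2)}(S\cap A)$ and verify $(S\cap A)G_R^T=0$. This holds because $SG_2^T=0$ forces each row of $S$ into $\row{H_2}=\ker(A)$, so $AS^T=0$, and the $i$-th diagonal entry of $AS^T$ equals the $\mathbb{F}_2$ weight of row $i$ of $S\cap A$.

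For $\mathcal{S}(\mathcal{Q})\le\mathcal{S}(\mathcal{Q}')$, take an aBBS $X$-stabilizer $X^{(L_2)}(S)$; the conditions $SH_R^T=0$ and $G_1S=0$ force $S_{ij}=r_i$ with $\vec r^T=\vec\alpha H_1$ for some $\vec\alpha$. Splitting $X^{(L_2)}(S)=X^{(L_2)}(S\cap(\mathbbm{1}-A))\cdot X^{(L_1)}(\diag{\vec r}A)$, the first factor is a product of $\ket{+}$-ancilla stabilizers, and the second should be recognized as an HGP $X$-stabilizer with $l_1$ part zero. The two HGP conditions reduce to $G_R\diag{\vec r}A=\vec\alpha H_1A=0$ and $\diag{\vec r}AH_2^T=0$, both of which follow from the pivotal identities above. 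The aBBS $Z$-stabilizer case is easier: its defining conditions $H_RS=0$ and $SG_2^T=0$ coincide with the HGP $Z$-stabilizer conditions at $T=0$.

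The main obstacle, or at least where one must be careful, is bookkeeping on how a given support in $L_1$ or $L_2$ splits between the shared type-$0$ qubits and the type-specific qubits, and which support conditions belong to which code. Once the identities $H_1A=0$ and $AH_2^T=0$ are isolated, every inclusion reduces to matching up constraint sets, and the symmetric argument for $\mathcal{Q}''$ is obtained by transposing the whole setup.
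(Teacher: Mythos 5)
Your proposal is correct and takes essentially the same route as the paper's proof: verify the qubit count via the hypergraph-product formula, then establish the four inclusions $\mathcal{S}_X^{(\mathcal{Q}')}\subseteq\mathcal{G}_X^{(\mathcal{Q})}$, $\mathcal{S}_Z^{(\mathcal{Q}')}\subseteq\mathcal{G}_Z^{(\mathcal{Q})}$, $\mathcal{S}_X^{(\mathcal{Q})}\subseteq\mathcal{S}_X^{(\mathcal{Q}')}$, and $\mathcal{S}_Z^{(\mathcal{Q})}\subseteq\mathcal{S}_Z^{(\mathcal{Q}')}$, handling ancillas and the shared type-0 qubits exactly as the paper does. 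Your matrix identities $H_1A=0$, $AH_2^T=0$ and the diagonal-entry computation for $(S\cap A)G_R^T=0$ are just a restatement in matrix language of the paper's argument that rows (columns) of $S$ are parity checks while rows (columns) of $A$ are codewords, so they intersect evenly.
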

\begin{proof}
We just prove that $\mathcal{Q}'$ is a gauge-fixing of $\mathcal{Q}$ because proving the same for $\mathcal{Q}''$ is analogous. To prove this half of the theorem, we do not need lattice $l_2$ and so omit it when we write down Pauli operators. Indeed, without $l_2$, code $\mathcal{Q}'$ is a subspace code -- its gauge group is its stabilizer group.

Parity check matrices $H_1$ and $H_2$ define classical codes $\mathcal{C}_1=\col{A}$ and $\mathcal{C}_2=\row{A}$, each encoding $k=\rank{A}$ bits. These codes have some generating matrices $G_1$ and $G_2$ that we will use. Code $\mathcal{C}_2^T$ has a generating matrix $F_2$. By the discussions in Section~\ref{sec:code_background}, we have that both $\mathcal{Q}'$ and $\mathcal{Q}$ encode $k$ qubits, thus verifying part (2) of the gauge-fixing definition, Definition~\ref{defn:gauge_fixing}.

Now, let us write down the stabilizers of $\mathcal{Q}'$ and the gauge group and stabilizers of $\mathcal{Q}$. These follow from the appropriate equations in Section~\ref{sec:code_background}, but with additions due to the ancillas: $\ket{+}$ ancillas in $L_2-L_1$ for $\mathcal{Q}'$ and $\ket{\perp}$ ancillas in $l_1$ for $\mathcal{Q}$.
\begin{widetext}
\begin{align}
\text{By Eq.~\eqref{eq:ShgpX}},\quad\mathcal{S}_X^{(\mathcal{Q}')}&=\big\{X^{(L_1)}(S_1)X^{(L_2)}(S_2)X^{(l_1)}(T):S_1H_2^T=H_R^TT,\text{\space}G_RS_1=0,\text{\space}TF_2^T=0,\text{\space}S_2\subseteq\mathbbm{1}-A\big\},\\
\text{Eq.~\eqref{eq:ShgpZ}},\quad\mathcal{S}_Z^{(\mathcal{Q}')}&=\big\{Z^{(L_1)}(S)Z^{(l_1)}(T):H_RS=TH_2,\text{\space}SG_2^T=0\big\},\\
\text{Eq.~\eqref{eq:ggX_aBBS}},\hspace{3pt}\quad\mathcal{G}_X^{(\mathcal{Q})}&=\big\{X^{(L_1)}(S_1)X^{(L_2)}(S_2)X^{(l_1)}(T):G_RS_1=0,\text{\space}S_2\subseteq\mathbbm{1}-A\big\},\\
\text{Eq.~\eqref{eq:ggZ_aBBS}},\hspace{3pt}\quad\mathcal{G}_Z^{(\mathcal{Q})}&=\big\{Z^{(L_1)}(S_1)Z^{(L_2)}(S_2)Z^{(l_1)}(T):S_2G_R^T=0,\text{\space}S_1\subseteq\mathbbm{1}-A\big\},\\
\text{Eq.~\eqref{eq:SabbsX}},\hspace{3pt}\quad\mathcal{S}_X^{(\mathcal{Q})}&=\big\{X^{(L_2)}(S):SH_R^T=0,\text{\space}G_1S=0\big\},\\
\text{Eq.~\eqref{eq:SabbsZ}},\hspace{3pt}\quad\mathcal{S}_Z^{(\mathcal{Q})}&=\big\{Z^{(L_1)}(S):H_RS=0,\text{\space}SG_2^T=0\big\}.
\end{align}
\end{widetext}
To show part (1) of Definition~\ref{defn:gauge_fixing}, we have four inclusions to prove: (a) $\mathcal{S}_X^{(\mathcal{Q}')}\subseteq\mathcal{G}_X^{(\mathcal{Q})}$, (b) $\mathcal{S}_Z^{(\mathcal{Q})}\subseteq\mathcal{S}_Z^{(\mathcal{Q}')}$, (c) $\mathcal{S}_Z^{(\mathcal{Q}')}\subseteq\mathcal{G}_Z^{(\mathcal{Q})}$, (d) $\mathcal{S}_X^{(\mathcal{Q})}\subseteq\mathcal{S}^{(\mathcal{Q}')}_X$.

Both inclusions (a) and (b) are obvious, so we focus on (c) and (d). For (c), let $M=Z^{(L_1)}(S)Z^{(l_1)}(T)\in\mathcal{S}_Z^{(\mathcal{Q}')}$. Set $S_1=S\cap(\mathbbm{1}-A)$ and $S_2=S\cap A$, so that $M=Z^{(L_1)}(S_1)Z^{(L_2)}(S_2)Z^{(l_1)}(T)$. Now $SG_2^T=0$ implies that rows of $S$ are parity checks of code $\mathcal{C}_2$. Since rows of $A$ are codewords of $\mathcal{C}_2$, each row of $S_2=S\cap A$ contains an even number of 1s. Thus, $S_2=S_2G_R^T=0$, and so $M\in\mathcal{G}_Z^{(\mathcal{Q})}$.

For (d), let $M=X^{(L_2)}(S)\in\mathcal{S}_X^{(\mathcal{Q})}$. Set $S_1=S\cap A$ and $S_2=S\cap(\mathbbm{1}-A)$. Since $G_1S=0$, columns of $S$ are parity checks of $\mathcal{C}_1$. Columns of $A$ are codewords of $\mathcal{C}_1$, and so each column of $S_1$ contains an even number of 1s, or $G_RS_1=0$. Similarly, $SH_R^T=0$ implies that rows of $S$ are codewords of $\mathcal{C}_R$, i.e.~all 1s or all 0s. Therefore, $\row{S_1}\subseteq\row{A}=\mathcal{C}_2$ and $S_1H_2^T=0$. This shows $M=X^{(L_1)}(S_1)X^{(L_2)}(S_2)X^{(l_1)}(0)\in\mathcal{S}_X^{(\mathcal{Q}')}$.
\end{proof}

A special case of Theorem~\ref{thm:aBBS_to_HGP} is the gauge-fixing of the Bacon-Shor code $\text{BBS}(\mathbbm{1})=\text{aBBS}(\mathbbm{1})$ (see Example~\ref{ex:bacon_shor}) to the surface code $\text{HGP}(H_R,H_R)$ (see Example~\ref{ex:surface_bounded} in the Appendix).

Let us conclude this section by briefly discussing the code $\text{HGP}(H_1,H_R)$ that we just showed is a gauge-fixing of $\text{aBBS}(A)$. In particular, we would like to argue that it has an asymptotic threshold when $H_1$ is chosen appropriately. Kovalev and Pryadko \cite{Kovalev2013} have shown that any $\llbracket N,K,D\rrbracket$ quantum code family that is $(\beta,\gamma)$-LDPC for constants $\beta$ and $\gamma$ and has distance scaling at least logarithmically in code size, i.e.~$D=\Omega(\log N)$, possesses an asymptotic threshold. Say that $H_1$ is a full-rank, $(b,c)$-LDPC set of parity checks for code $\mathcal{C}_1$ with parameters $[n,k,d]$ and $H_R$ represents the length $n$ repetition code. Then, $\text{HGP}(H_1,H_R)$ is $(\gamma,\gamma)$-LDPC for $\gamma=\max(b,c)+2$ and has parameters $\llbracket N,k,d\rrbracket$ with $N\le 2n^2$. Clearly then, if $\mathcal{C}_1$ is an LDPC code family with $d$ scaling at least logarithmically in $n$, i.e.~$d=\Omega(\log n)$, then by \cite{Kovalev2013} the quantum code family $\text{HGP}(H_1,H_R)$ has an asymptotic threshold.

\section{Discussion}\label{sec:discussion}
We have presented another connection between classical and quantum error-correction and discussed one of its consequences, the construction of Bravyi-Bacon-Shor subsystem codes that are local in 2-dimensions and have optimal parameters. We also showed a somewhat surprising connection between Bravyi-Bacon-Shor codes and the hypergraph product codes via the process of gauge-fixing. 

We briefly point out two somewhat obvious but interesting properties of any gauge-fixing $\mathcal{Q}'$ of Bravyi-Bacon-Shor codes, including e.g.~$\text{HGP}(H_1,H_R)$. First, if the Bravyi-Bacon-Shor codes are optimal, then $\mathcal{Q}'$ is not local in 2-dimensions. This is necessarily the case because if an $\llbracket N,K,D\rrbracket$ subsystem code local in 2-dimensions can be gauge-fixed to a $\llbracket N,K,D'\rrbracket$ subspace code ($D'\ge D$ by Lemma~\ref{lem:gauge_fixing_distance}) local in 2-dimensions, then $KD'^{2}=O(N)$ by \cite{Bravyi2010} implying that $KD<KD^2\le O(N)$, i.e.~the subsystem code is suboptimal. This is also why the 2-dimensional ``topological" subsystem codes (see e.g.~\cite{Bombin2010,Suchara2011,Andrist2012,Sarvepalli2012,Bravyi2013}), which are defined by having stabilizer groups that are local in 2-dimensions, cannot actually compete, despite being subsystem codes, for the $KD=O(N)$ bound.

Second, $\mathcal{Q}'$ does not have constant rate. Indeed, simply rearranging the subsystem bound we get $K/N=O(1/D)$, which vanishes provided the code family has growing distance. Thus, it is impossible to gauge-fix Bravyi-Bacon-Shor codes to hypergraph product codes with constant rate, which is interesting because obtaining constant rate quantum codes is one of the most notable properties of the general-case hypergraph product construction \cite{Tillich2014}. Instead, we necessarily ended up gauge-fixing to a special case $\text{HGP}(H_1,H_R)$ without constant rate.

On the other hand, one of the interesting consequences of our results is the ability to gauge-switch between several hypergraph product codes. For example, one can switch between $\text{HGP}(H_R,H_2)$ and $\text{HGP}(H_1,H_R)$ for any $H_1$ and $H_2$ or between $\text{HGP}(H_1,H_R)$ and $\text{HGP}(H_1',H_R)$ where $H_1$ and $H_1'$ are different parity check matrices for the same classical code. In the process, encoded data is protected by the underlying augmented Bravyi-Bacon-Shor code (see Theorem~\ref{thm:aBBS_to_HGP}), which has the same code distance as the hypergraph product codes in question although it lacks an asymptotic threshold. Nonetheless, generalizing this gauge-switching idea to more hypergraph product codes would be an interesting extension of our work here.

\section*{Acknowledgements}
The author gratefully acknowledges helpful discussions with Sergey Bravyi, Ken Brown, Chris Chamberland, and Andrew Cross. Partial support for this project was generously provided by the IBM Research Frontiers Institute.

\appendix

\section{Hypergraph product codes}\label{app:hypergraph_product_codes}
In this appendix, we review the original presentation of hypergraph product codes \cite{Tillich2014} and verify that our description in Section~\ref{sec:HGP} is equivalent. We also review the derivation of the hypergraph product code parameters. Mainly, our arguments are similar to those in \cite{Tillich2014} and \cite{Campbell2018}.

Recall that the input to the construction is two parity check matrices $H_1\in\mathbb{F}_2^{n_1^T\times n_1}$ and $H_2\in\mathbb{F}_2^{n_2^T\times n_2}$. These have corresponding full-rank generating matrices $G_1\in\mathbb{F}_2^{k_1\times n_1}$ and $G_2\in\mathbb{F}_2^{k_2\times n_2}$ for the classical codes $\mathcal{C}_1$ and $\mathcal{C}_2$. Without loss of generality, we assume $k_1,k_2>0$. Additionally, there are full-rank generating matrices $F_1\in\mathbb{F}_2^{k_1^T\times n_1^T}$ and $F_2\in\mathbb{F}_2^{k_2^T\times n_2^T}$ for the transpose classical codes $\mathcal{C}^T_1$ and $\mathcal{C}^T_2$.

In the original description, the supports of Pauli operators are specified by vectors from $\mathbb{F}_2^{N}$ with $N=n_1n_2+n_1^Tn_2^T$. Generating sets of $X$- and $Z$-type stabilizers are presented as rows of matrices:
\begin{align}\label{eq:ShgpX_gen}
S_X&=\left(\begin{array}{cc}H_1\otimes I_{n_2}&I_{n_1^T}\otimes H_2^T\end{array}\right),\\\label{eq:ShgpZ_gen}
S_Z&=\left(\begin{array}{cc}I_{n_1}\otimes H_2&H_1^T\otimes I_{n_2^T}\end{array}\right),
\end{align}
where $I_n$ is the $n\times n$ identity matrix. That is, if $X^{\vec v}=\prod_{i=1}^{N}X_i^{\vec v_i}$ and we wanted to write out the entire sets of Pauli stabilizers, we would have
\begin{align}
\mathcal{S}_X&=\big\{X^{\vec v}:\vec v\in\row{S_X}\big\},\\
\mathcal{S}_Z&=\big\{Z^{\vec u}:\vec u\in\row{S_Z}\big\}.
\end{align}
It is easy to see that these stabilizers commute, because $S_XS_Z^T=0$. Moreover, from the generating sets in Eqs.~\eqref{eq:ShgpX_gen}, \eqref{eq:ShgpZ_gen}, we note that using classical LDPC parity checks $H_1$ and $H_2$ lead to a quantum LDPC code with the appropriate parameters from Eqs.~\eqref{eq:hgpBeta}, \eqref{eq:hgpGamma}.

We can calculate the number of encoded qubits by finding the number of independent stabilizer generators $\rank{S_X}+\rank{S_Z}$ and subtracting that from $N$. Basic linear algebra says
\begin{equation}
\rank{S_X}=\rank{S_X^T}=n_1^Tn_2-\dim(\ker(S_X^T)).
\end{equation}
Since
\begin{equation}
S_X^T=\left(\begin{array}{c}H_1\otimes I_{n_2}\\I_{n_1^T}\otimes H_2^T\end{array}\right)
\end{equation}
has kernel
\begin{equation}
\ker(S_X^T)=\{x\otimes y:x\in\mathcal{C}_1,y\in\mathcal{C}_2^T\},
\end{equation}
we see that $\dim(\ker(S_X^T))=\dim(\mathcal{C}_1)\dim(\mathcal{C}_2^T)=k_1k_2^T$. A similar argument holds for $S_Z$. Thus, we have
\begin{align}
\rank{S_X}&=n_1^Tn_2-k_1k_2^T,\\
\rank{S_Z}&=n_1n_2^T-k_1^Tk_2.
\end{align}
Accordingly, the hypergraph product code encodes
\begin{align}
K&=N-(n_1^Tn_2-k_1k_2^T)-(n_1n_2^T-k_1^Tk_2)\\
&=(n_1-n_1^T)(n_2-n_2^T)+k_1k_2^T+k_1^Tk_2\\
&=(k_1-k_1^T)(k_2-k_2^T)+k_1k_2^T+k_1^Tk_2\\
&=k_1k_2+k_1^Tk_2^T
\end{align}
qubits. For the third equality, we used Eq.~\eqref{eq:transpose_relation}. This verifies Eq.~\eqref{eq:hgpK}.

Let us create a generating set of logical operators for these qubits. We notice that
\begin{align}\label{eq:LX}
L_X&=\left(\begin{array}{cc}
H_1\otimes I_{n_2}&I_{n_1^T}\otimes H_2^T\\
I_{n_1}\otimes G_2&0\\
0&F_1\otimes I_{n_2^T}
\end{array}\right),\\\label{eq:LZ}
L_Z&=\left(\begin{array}{cc}
I_{n_1}\otimes H_2&H_1^T\otimes I_{n_2^T}\\
G_1\otimes I_{n_2}&0\\
0&I_{n_1^T}\otimes F_2
\end{array}\right)
\end{align}
do in fact provide sets of logical operators because $S_ZL_X^T=0$ and $S_XL_Z^T=0$ demonstrate the appropriate commutation. 

To show that these are indeed complete sets of logical operators, we can calculate the rank of $C=L_XL_Z^T$, which encodes how the $X$- and $Z$-type logical operators commute. There should be $K$ independent, anti-commuting pairs of logical operators, so the rank of $C$ should be $K$. Since
\begin{equation}
C=\left(\begin{array}{ccc}
0&0&0\\
0&G_1^T\otimes G_2&0\\
0&0&F_1\otimes F_2^T
\end{array}\right),
\end{equation}
we do have
\begin{align}
\rank{C}&=\rank{G_1}\rank{G_2}+\rank{F_1}\rank{F_2}\\
&=k_1k_2+k_1^Tk_2^T=K.
\end{align}
We also point out that the last rows of $L_X$ and $L_Z$ (those involving $F_1$ and $F_2$) only contain nontrivial logical operators if both $F_1$ and $F_2$ are nontrivial matrices (i.e.~both $k_1^T$ and $k_2^T$ are greater than zero).

Now consider ``reshaping" \cite{Campbell2018} the vectors that represent Paulis into matrices. Let $\hat e_i$ be the unit vector $(\hat e_i)_j=\delta_{ij}$. A vector $\vec s\in\mathbb{F}_2^{n_1n_2}$ can be decomposed as
\begin{equation}\label{eq:vec_reshape}
\vec s=\sum_{i=1}^{n_1}\sum_{j=1}^{n_2}S_{ij}\hat e_i\otimes\hat e_j
\end{equation}
where $S$ is the matrix corresponding to $\vec s$ and the support of Pauli $X^{\vec s}$ once we have placed it on the $n_1\times n_2$ lattice $L$. We previously wrote this Pauli as $X^{(L)}(S)$. Likewise, vectors $\vec t\in\mathbb{F}_2^{n_1^Tn_2^T}$ are reshaped to represent Paulis on the $n_1^T\times n_2^T$ lattice $l$.

Linear transformations of $\vec s$ correspond to matrix multiplications on $S$. By Eq.~\eqref{eq:vec_reshape},
\begin{equation}\label{eq:lin_reshape}
(U\otimes V)\vec s\longmapsto USV^T.
\end{equation}
Likewise with transformations on $\vec t$.

At this point we can justify our presentation of the stabilizers and logical operators, Eqs.~(\ref{eq:ShgpX}, \ref{eq:ShgpZ}) and (\ref{eq:LhgpX}, \ref{eq:LhgpZ}) in the main text. We can characterize elements of $\row{S_X}$ by the fact that they commute with all rows of $L_Z$.
\begin{equation}
\left(\begin{array}{c}\vec s\\\vec t\end{array}\right)\in\row{S_X}\text{\space\space iff\space\space}L_Z\left(\begin{array}{c}\vec s\\\vec t\end{array}\right)=\vec 0.
\end{equation}
Reshaping the linear equations on the right using Eq.~\eqref{eq:lin_reshape} gives the equations
\begin{align}
SH_2^T=H_1^TT,\quad G_1S=0,\quad TF_2^T=0,
\end{align}
which are exactly the conditions on $S$ and $T$ in $\mathcal{S}^{(\text{hgp})}_X$, Eq.~\eqref{eq:ShgpX}.

Similarly, elements of $\row{S_Z}$ are characterized by commutation with rows of $L_X$, elements of $\row{L_X}$ by commutation with rows of $S_Z$, and elements of $\row{L_Z}$ by commutation with rows of $S_X$. After reshaping the appropriate linear equations, one can confirm Eqs.~(\ref{eq:ShgpZ}, \ref{eq:LhgpX}, \ref{eq:LhgpZ}).

Finally, we prove that the hypergraph product code has the claimed distance from Eq.~\eqref{eq:hgpD}. We begin by bounding the weight of nontrivial $X$-type logical operators, those elements of $\mathcal{L}_X^{(\text{hgp})}-\mathcal{S}_X^{(\text{hgp})}$. If $M=X^{(L)}(S)X^{(l)}(T)$, then $SH_2^T=H_1^TT$ and there is an $M'\in\mathcal{L}_Z^{(\text{hgp})}-\mathcal{S}_Z^{(\text{hgp})}$ that anticommutes with $M$. In fact, given the basis in $L_Z$, Eq.~\eqref{eq:LZ}, we know something about the form of $M'$ -- it corresponds either to a row of $G_1\otimes I_{n_2}$ (case (1)) or, if $k_1^T,k_2^T>0$, to a row of $I_{n_1^T}\otimes F_2$ (case (2)).

In case (1), we can take $M'=X^{(L)}(S')$ where $S'$ is an outer product $S'=\vec c\hspace{2pt}\hat e_j^T$ for some $\vec c\in\mathcal{C}_1$ and some $j$. As $M$ and $M'$ anticommute,
\begin{equation}
1=\tr\left(S^TS'\right)=\hat e_j^TS^T\vec c
\end{equation}
and clearly $S^T\vec c\neq\vec 0$. Now, $H_2S^T\vec c=T^TH_1\vec c=0$ and thus $S^T\vec c$ is a nonzero vector in $\ker(H_2)=\mathcal{C}_2$. Therefore, $|M|\ge|S|=|S^T|\ge|S^T\vec c|\ge d_2$.

In case (2), which is relevant only if $k_1^T, k_2^T>0$, the argument is analogous. Take $M'=X^{(l)}(T')$ where $T'$ is the outer product $T'=\hat e_i\hspace{1pt}\vec b^T$ for some $\vec b\in\mathcal{C}_2^T$ and some $i$. As $M$ and $M'$ anticommute,
\begin{equation}
1=\tr\left(T^TT'\right)=\vec b^TT^T\hat e_i
\end{equation}
and clearly $\vec b^TT^T\neq\vec 0$. Also, $\vec b^TT^TH_1=\vec b^TH_2S^T=\vec 0^T$ and so $T\vec b$ is a nonzero vector in $\ker(H_1^T)=\mathcal{C}_1^T$. Thus, $|M|\ge|T|\ge|T\vec b|\ge d_1^T$.

From these two cases, we conclude 
\begin{equation}
|M|\ge\bigg\{\begin{array}{lr}d_2,&k_1^T=0\text{ or }k_2^T=0\\\min(d_1^T,d_2),&\text{otherwise}\end{array}.
\end{equation}
If we go through the analogous argument for nontrivial $Z$-type logical operators, we would find their weight bounded below by $d_1$ in the case that one of $\mathcal{C}_1^T$ or $\mathcal{C}_2^T$ is trivial and $\min(d_1,d_2^T)$ otherwise. Thus, the code distance of the hypergraph product code is
\begin{equation}
D\ge\bigg\{\begin{array}{lr}\min(d_1,d_2),&k_1^T=0\text{ or }k_2^T=0\\\min(d_1,d_2,d_1^T,d_2^T),&\text{otherwise}\end{array}.
\end{equation}
By looking at $L_X$ and $L_Z$, Eqs.~\eqref{eq:LX} and \eqref{eq:LZ}, we see that there are indeed logical operators saturating this inequality, and so we have verified Eq.~\eqref{eq:hgpD}.

We conclude this appendix by reviewing the surface code as a special case of the hypergraph product. In fact, there are two versions of the surface code that can be made: the one with boundary \cite{Bravyi1998} and the one on a torus \cite{Kitaev2003}.

\begin{example}\label{ex:surface_bounded}
The surface code with boundary \cite{Bravyi1998} is an $\llbracket n^2+(n-1)^2,1,n\rrbracket$ code. These parameters match those of $\text{HGP}(H_R,H_R)$. Indeed, we draw some of the stabilizers indicated by rows of $S_X$ and $S_Z$ in Fig.~\ref{fig:surface_code}(a), in which one can recognize the surface code.
\end{example}

\begin{example}\label{ex:surface_torus}
The surface code on a torus \cite{Kitaev2003} is a $\llbracket 2n^2,2,n\rrbracket$ code, matching the parameters of $\text{HGP}(H_R',H_R')$ for
\begin{equation}
H_R'=\left(\begin{array}{ccccccc}
1&1&0&\dots&0&0&0\\
0&1&1&0&\dots&0&0\\
&\ddots&&\ddots&&\ddots&\\
0&0&0&\dots&0&1&1\\
1&0&0&\dots&0&0&1
\end{array}\right).
\end{equation}
This is an over-complete parity check matrix for the $[n,1,n]$ classical repetition code -- the sum of all rows is $\vec0$. Notice the transpose code is also the $[n,1,n]$ repetition code. We draw some of the stabilizers corresponding to rows of $S_X$ and $S_Z$ in Fig.~\ref{fig:surface_code}(b) in which one can recognize the surface code on the torus.
\end{example}

\begin{figure}
\includegraphics[width=\columnwidth]{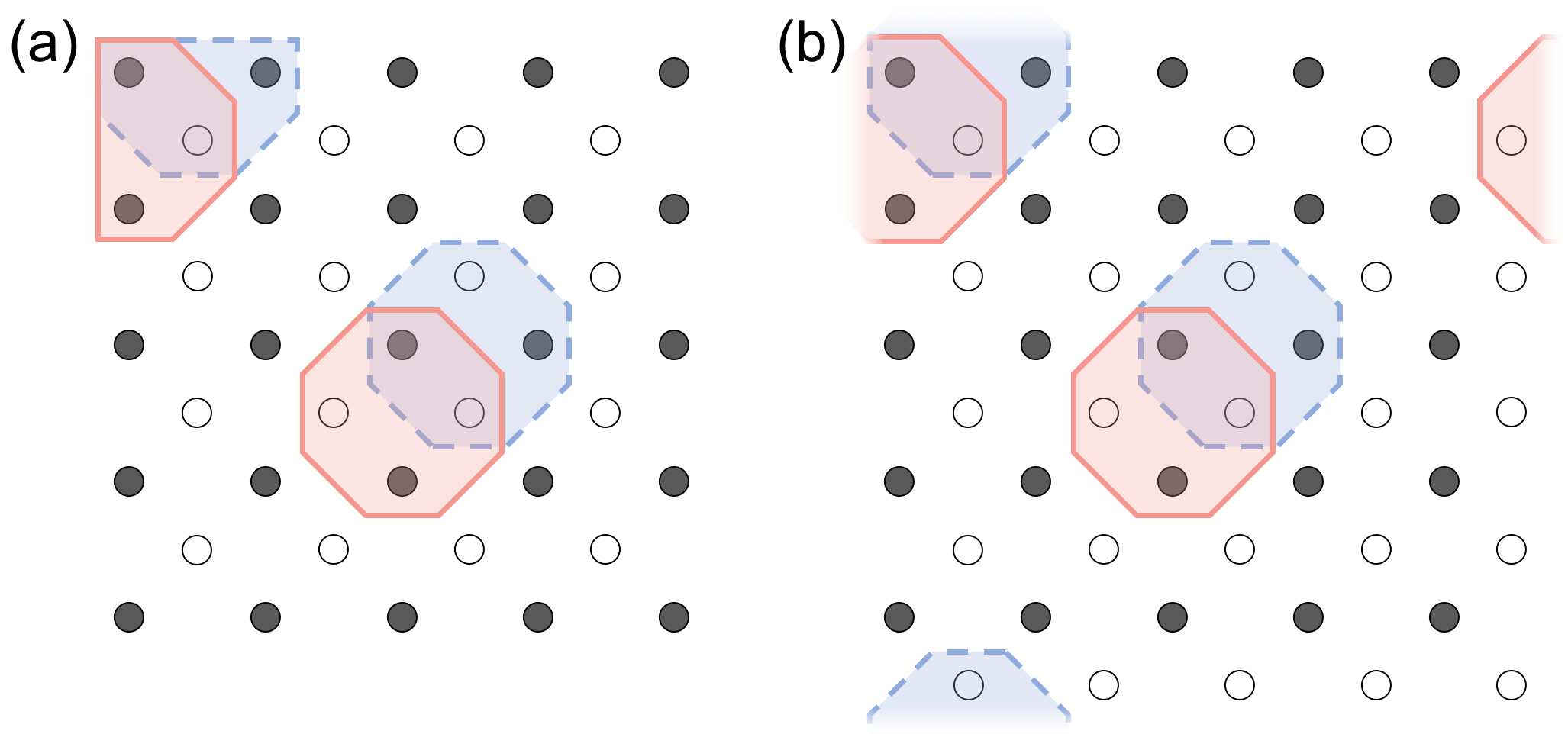}
\caption{\label{fig:surface_code} The surface code (a) with boundary and (b) on the torus drawn on the $L$ (filled qubits) and $l$ (unfilled qubits) lattices of the hypergraph product. Some example $X$- (red, solid) and $Z$-type (blue, dashed) stabilizers are shown. These example stabilizers correspond to select rows of the matrices $S_X$ and $S_Z$ of the appropriate hypergraph products.}
\end{figure}


\section{Expander codes}\label{app:expander_codes}
Constructions of good families of classical LDPC codes based on expander graphs are known. In this section, we review the segment of expander theory that is needed to prove the goodness of these codes, and therefore the goodness of Bravyi-Bacon-Shor codes constructed from them. All of this section is classical and we expect to do no more than inform any uninitiated readers of what is already known.

\subsection{Construction}\label{app:expander_codes_construction}
The objects used to construct good classical LDPC codes are called lossless-expanders \cite{Capalbo2002}, though we will refer to them simply as expanders. Mathematically, these expanders are undirected, bipartite graphs, which we will represent by a tuple $(L,R,E)$ of left nodes, right nodes, and edges. A node $v$ has a degree, the number of edges incident to it, which we denote $\deg{v}$. Given a set of nodes $V\subseteq L\cup R$, we can talk about its set of neighbors
\begin{equation}
\Gamma(V)=\{u:\exists v\in V\text{ s.t.~}(u,v)\in E\}.
\end{equation}
Expanders attempt to maximize the size of $\Gamma(S)$ for all $S\subseteq L$ sufficiently small.

\begin{defn}[Expanders]
A $(n,m,b,\delta,\epsilon)$ expander is a bipartite graph $(L,R,E)$ satisfying
\begin{enumerate}
\item[]\hspace{-10pt}Size: $|L|=n$, $|R|=m$,
\item[]\hspace{-10pt}Degree: $\forall v\in L$, $\deg(v)=b$, $\forall w\in R$, $\deg(w)=c=nb/m$,
\item[]\hspace{-10pt}Expansion: $\forall S\subseteq L$ s.t.~$|S|\le(1-\delta)n$, $(1-\epsilon)b|S|\le|\Gamma(S)|\le b|S|$.
\end{enumerate}
\end{defn}
\noindent In particular, expanders with smaller $\delta$ and $\epsilon$ are better than those with larger values. The expansion property is trivial if $\delta>1-2/n$ for instance. Moreover, if $\delta\le1-2/n$ and $\epsilon=0$, only the graph with $m=nb$ right-nodes and $n$ connected components suffices to meet the definition. Finally, $b=1$ or $c=nb/m=1$ lead to similar trivialities. Thus, we take $\delta\le1-2/n$, $\epsilon>0$, and $b,c>1$ throughout.

From the definition, one can prove other facts about expanders. One very useful fact for us concerns the size of the set of ``unique" neighbors of $V\subseteq L$,
\begin{equation}
\Gamma_1(V)=\{u\in\Gamma(V):|\Gamma(\{u\})\cap V|=1\}.
\end{equation}
Elements of $\Gamma_1(V)$ are the elements of $\Gamma(V)$ that have just one neighbor in $V$.
\begin{lem}\label{lem:unique_neighbors}
Suppose the bipartite graph $(L,R,E)$ is an $(n,m,b,\delta,\epsilon)$ expander and $S\subseteq L$ satisfies $|S|\le(1-\delta)n$. Then
\begin{equation}
\Gamma_1(S)\ge(1-2\epsilon)b|S|.
\end{equation}
\end{lem}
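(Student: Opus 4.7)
The plan is to prove this by double-counting the edges incident to $S$. Since every left node has degree $b$, the total number of edges with an endpoint in $S$ is exactly $b|S|$. On the other hand, each neighbor in $\Gamma_1(S)$ contributes exactly one such edge, while each neighbor in $\Gamma(S) \setminus \Gamma_1(S)$ contributes at least two. So I would set $\Gamma_{\ge 2}(S) = \Gamma(S) \setminus \Gamma_1(S)$ and write the inequality
\begin{equation}
b|S| \;\ge\; |\Gamma_1(S)| + 2|\Gamma_{\ge 2}(S)| \;=\; 2|\Gamma(S)| - |\Gamma_1(S)|,
\end{equation}
where the equality uses $|\Gamma_1(S)| + |\Gamma_{\ge 2}(S)| = |\Gamma(S)|$.

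Rearranging this yields $|\Gamma_1(S)| \ge 2|\Gamma(S)| - b|S|$. Now I would invoke the lower-bound half of the expansion property from the definition (which applies because $|S| \le (1-\delta)n$), namely $|\Gamma(S)| \ge (1-\epsilon)b|S|$. Substituting gives
\begin{equation}
|\Gamma_1(S)| \;\ge\; 2(1-\epsilon)b|S| - b|S| \;=\; (1-2\epsilon)b|S|,
\end{equation}
which is exactly the claim.

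There isn't really a hard step here; the whole argument is a two-line double-counting combined with the expansion hypothesis. The only thing worth being careful about is making sure the edge count is a genuine inequality rather than equality: a right node could in principle have many neighbors in $S$, so contributing exactly $2$ in the count is only a lower bound on its contribution. That is precisely what makes the inequality $b|S| \ge |\Gamma_1(S)| + 2|\Gamma_{\ge 2}(S)|$ valid in the direction we need.
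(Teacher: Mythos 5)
Your proof is correct and is essentially identical to the paper's: the same double-counting of the $b|S|$ edges leaving $S$, the same inequality $b|S|\ge|\Gamma_1(S)|+2|\Gamma_{\ge 2}(S)|=2|\Gamma(S)|-|\Gamma_1(S)|$, and the same application of the expansion lower bound $|\Gamma(S)|\ge(1-\epsilon)b|S|$. Your closing remark about the edge count being an inequality (since a right node may have more than two neighbors in $S$) is exactly the point the paper also makes.
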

\begin{proof}
The number of edges leaving $S$ is $b|S|$. This is the same as the number of edges entering $S$ from $\Gamma(S)$. The nodes in $\Gamma_1(S)\subseteq\Gamma(S)$ have exactly 1 such edge, while those in $\Gamma_{\ge2}(S)=\Gamma(S)-\Gamma_1(S)$ have at least 2 such edges. Thus,
\begin{align}
b|S|&\ge 2|\Gamma_{\ge2}(S)|+|\Gamma_1(S)|\\
&=|\Gamma_{\ge2}(S)|+|\Gamma(S)|\\
&=2|\Gamma(S)|-|\Gamma_1(S)|\\
&\ge2(1-\epsilon)b|S|-|\Gamma_1(S)|,
\end{align}
where the last inequality uses the expansion property.
\end{proof}

To create a classical code from an expander, we will use (the simplest version of) Tanner's construction \cite{Tanner1981}. This prescribes that we view the left nodes $L$ as a set of code bits and each right node as specifying a parity check on the bits that are its neighbors. More precisely, define the incidence matrix $\Lambda\in\mathbb{F}_2^{|L|\times |R|}$ of a bipartite graph $G=(L,R,E)$ as
\begin{equation}
\Lambda_{uv}=\bigg\{\begin{array}{ll}
0,&(u,v)\not\in E\\
1,&(u,v)\in E
\end{array}.
\end{equation}
Then, $H=\Lambda^T$ takes the role of a parity check matrix to define the Tanner code of $G$, $\mathcal{C}_G=\ker(H)$.

If $G$ is an expander, we call $\mathcal{C}_G$ an expander code. In this case, we can place useful bounds on its code parameters.
\begin{lem}\label{lem:expander_code_parameters}
Suppose $G=(L,R,E)$ is an $(n,m,b,\delta,\epsilon)$ expander with $\epsilon<1/2$. Then $\mathcal{C}_G$ is a $[n,k,d]$ code with $k\ge n-m$ and $d\ge2(1-\epsilon)\lfloor(1-\delta)n\rfloor$.
\end{lem}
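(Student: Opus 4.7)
The first bound is immediate from linear algebra: $H = \Lambda^T \in \mathbb{F}_2^{m \times n}$ has at most $m$ linearly independent rows, so $\rank{H} \le m$ and the rank--nullity theorem gives $k = n - \rank{H} \ge n - m$. The substance of the lemma is the distance bound, and my plan is to prove the contrapositive: any nonzero $\vec w \in \mathcal{C}_G$ with support $S \subseteq L$ must satisfy $|S| \ge 2(1-\epsilon)\lfloor(1-\delta)n\rfloor$.

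The main idea is a two-stage use of the expansion property. Pick an arbitrary subset $T \subseteq S$ with $|T| \le (1-\delta)n$, so that \lemm{unique_neighbors} applies and gives $|\Gamma_1(T)| \ge (1-2\epsilon)b|T|$. Each $u \in \Gamma_1(T)$ is a parity check with exactly one neighbor in $T$. Because $\vec w$ is a codeword, $u$ must have an \emph{even} number of neighbors in $S$, so $u$ has at least one further neighbor in $S \setminus T$. Hence $\Gamma_1(T) \subseteq \Gamma(S \setminus T)$, and each node of $\Gamma_1(T)$ receives at least one edge from $S \setminus T$. Counting the $b|S \setminus T|$ edges leaving $S \setminus T$ yields
\begin{equation}
b(|S| - |T|) = b|S \setminus T| \;\ge\; |\Gamma_1(T)| \;\ge\; (1-2\epsilon)b|T|,
\end{equation}
which rearranges to the clean inequality $|S| \ge 2(1-\epsilon)|T|$.

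It remains to optimize the choice of $T$. If $|S| \le \lfloor(1-\delta)n\rfloor$, take $T = S$; then $|S| \ge 2(1-\epsilon)|S|$ forces $\epsilon \ge 1/2$, contradicting the hypothesis $\epsilon < 1/2$ (recall $|S| > 0$). Therefore $|S| > \lfloor(1-\delta)n\rfloor$, and taking any $T \subseteq S$ with $|T| = \lfloor(1-\delta)n\rfloor$ delivers $|S| \ge 2(1-\epsilon)\lfloor(1-\delta)n\rfloor$, as desired.

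The step I expect to be trickiest to state cleanly is the edge-counting one: the naive route (observing that any unique neighbor of $S$ would be a violated check, hence $|S| > (1-\delta)n$) gives only a factor of $1$ rather than $2(1-\epsilon)$. The improvement comes from choosing $T$ as a \emph{proper} sub-support and exploiting that the unique neighbors of $T$ must be ``cancelled'' by further support in $S \setminus T$; this doubling-type argument, driven by the even-parity constraint, is the core of the proof and the only place where the codeword condition (as opposed to just set expansion) is really used.
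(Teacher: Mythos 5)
Your proof is correct and is essentially the paper's argument in contrapositive form: the paper shows that any support $S$ with $|S|<2(1-\epsilon)\lfloor(1-\delta)n\rfloor$ leaves some check in $\Gamma_1(T)\setminus\Gamma(S-T)$ unsatisfied (choosing $T\subseteq S$ of size $\lfloor(1-\delta)n\rfloor$ and invoking Lemma~\ref{lem:unique_neighbors} plus the same edge count $|\Gamma(S\setminus T)|\le b|S\setminus T|$), whereas you run the identical counting forward from the codeword condition $\Gamma_1(T)\subseteq\Gamma(S\setminus T)$. The two case splits also match, so this is the same proof.
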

\begin{proof}
The parity check matrix $H$ of $\mathcal{C}_G$ has $m$ rows, and thus its kernel is at least $n-m$ dimensional. So, $k\ge n-m$.

Let $\vec s\in\mathbb{F}_2^n$ be a bit string and $S=\{v:\vec s_v=1\}\subseteq L$ be its support. We show that if $|\vec s|=|S|<2(1-\epsilon)\lfloor(1-\delta)n\rfloor$, then there must be a parity check unsatisfied by $\vec s$, and so $\vec s$ is not a codeword. To do this, it is sufficient to show that $\Gamma_1(S)$ is not empty -- any $w\in\Gamma_1(S)$ cannot be a satisfied check as only a single bit in the check is 1.

Suppose first that $|S|\le(1-\delta)n$. Then by Lemma~\ref{lem:unique_neighbors}, we have $|\Gamma_1(S)|\ge(1-2\epsilon)b|S|>0$, using the assumption $\epsilon<1/2$.

Now suppose $(1-\delta)n<|S|<2(1-\epsilon)\Delta$ where $\Delta=\lfloor(1-\delta)n\rfloor$. Let $T\subseteq S$ satisfy $|T|=\Delta$. So,
\begin{equation}\label{eq:T_bound}
\Gamma_1(T)\ge(1-2\epsilon)b\Delta
\end{equation}
by Lemma~\ref{lem:unique_neighbors}. At the same time $|S-T|=|S|-|T|<(1-2\epsilon)\Delta<\Delta$ implies
\begin{equation}\label{eq:S-T_bound}
|\Gamma(S-T)|<(1-2\epsilon)b\Delta,
\end{equation} 
because nodes in $S-T$ are degree $b$. A check $w$ is in $\Gamma_1(S)$ if $w\in\Gamma_1(T)$ and $w\not\in\Gamma(S-T)$. Since $|\Gamma_1(T)|>|\Gamma(S-T)|$ by Eqs.~\eqref{eq:T_bound}, \eqref{eq:S-T_bound}, we have $|\Gamma_1(S)|>0$.
\end{proof}

It is worth noting when a family of expander codes $[n,k,d]$ is good, i.e.~$k=\Theta(n)$ and $d=\Theta(n)$. Using Lemma~\ref{lem:expander_code_parameters}, it is sufficient that $\epsilon,\delta$ are constant (independent of $n$) and that $m/n=b/c$ is constant. It is typical to construct families in which $b$ (the degree of nodes on the left) and $c$ (the degree of nodes on the right) are both constant individually. This makes the code a low-density parity check code and also enables the efficient decoder discussed in the next section.

Lemma~\ref{lem:expander_code_parameters} assumes $\epsilon<1/2$ which means it is only sufficient for analyzing expander codes constructed from expanders with sufficiently large expansion. For a long time, although expanders of arbitrarily large size with $\epsilon<1/2$ were known to exist by counting, it was not known how to construct them. However, the zig-zag construction \cite{Capalbo2002} eventually solved this problem. For our purposes, a suitable distillation of their result is the following.
\begin{thm}[Hoory, Linial, Wigderson \cite{Hoory2006}, Thm.~10.4]\label{thm:zigzag_construction}
For every $\epsilon>0$ and $\alpha\in(0,1)$, there exist constants $\gamma,\sigma$ and an explicit family of $(n,m,b,\delta,\epsilon)$ expanders with $m=\alpha n$,
\begin{align}
b&\le\left(\frac{1}{\epsilon \alpha}\right)^\gamma,\\
\delta&\le1-\sigma\left(\epsilon \alpha\right)^{\gamma+1}.
\end{align}
Using Lemma~\ref{lem:expander_code_parameters}, the corresponding expander codes have parameters $[n,k,d]$ with
\begin{align}
k&\ge(1-\alpha)n,\\
d&\ge2(1-\epsilon)\lfloor\sigma(\epsilon \alpha)^{\gamma+1}n\rfloor.
\end{align}
\end{thm}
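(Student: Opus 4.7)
The plan is to split the statement into two parts and attack them in sequence. The first part (existence of the expander family with the stated $b$ and $\delta$ bounds) is purely a graph-theoretic construction and is the substantive content; the second part (the $k$ and $d$ bounds for the Tanner code) is an immediate consequence of \lemm{expander_code_parameters} already proven above.

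For the first part, I would follow the zig-zag / conductors construction of Capalbo, Reingold, Sudan, Vadhan, and Wigderson. The idea is iterative amplification. One first exhibits a small ``seed'' bipartite graph $H$ with strong lossless expansion; because $H$ is of bounded size, it can be produced explicitly either by exhaustive search or by derandomizing a probabilistic existence proof. One then defines the zig-zag product $G \circledcirc H$ of a large graph $G$ with this small graph $H$. The product has three key properties: (a) its left-vertex count is the product of the left-vertex counts of $G$ and $H$; (b) its left-degree is the left-degree of $H$ squared, so degree stays bounded through iteration; and (c) it inherits lossless expansion from $G$ and $H$ with only a controlled loss. Starting from a bounded base and iterating $G_{i+1} = G_i^s \circledcirc H$ for appropriate $s$ generates an infinite family of expanders of arbitrary size $n$ with left-degree $b$ bounded polynomially in $1/(\epsilon\alpha)$ and with $1-\delta$ bounded below by a polynomial in $\epsilon\alpha$. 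Tuning $H$ so the per-iteration expansion slack matches the target $\epsilon$ yields the precise exponents $\gamma$ and constant $\sigma$ in the statement, giving $b \le (\epsilon\alpha)^{-\gamma}$ and $\delta \le 1 - \sigma(\epsilon\alpha)^{\gamma+1}$, while the aspect ratio $m/n = b/c$ is controlled to equal $\alpha$ by adjusting the right-degree.

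For the second part, I would simply invoke \lemm{expander_code_parameters}. With $m = \alpha n$ and the constructed $\delta$, the lemma gives
\begin{align}
k &\ge n - m = (1-\alpha)n, \\
d &\ge 2(1-\epsilon)\lfloor (1-\delta) n \rfloor \ge 2(1-\epsilon)\lfloor \sigma(\epsilon\alpha)^{\gamma+1} n \rfloor,
\end{align}
which is exactly the claim. The condition $\epsilon < 1/2$ required by the lemma is harmless since we are free to choose $\epsilon$ small in the outer quantifier.

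The main obstacle is the first part, and specifically the \emph{lossless} (as opposed to merely spectral) nature of the expansion. Classical zig-zag analyses of Reingold-Vadhan-Wigderson yield expanders with good second-eigenvalue bounds but only weak vertex expansion of roughly $|\Gamma(S)| \ge \tfrac{b}{2}|S|$, which is insufficient because \lemm{expander_code_parameters} needs $\epsilon$ strictly below $1/2$ and ideally near $0$. Pushing the analysis to near-lossless expansion requires the conductor-based variant of Capalbo et al.\ and careful bookkeeping of how the ``slack'' parameter $\epsilon$ accumulates across iterations; this quantitative tracking is where I would lean most heavily on \cite{Hoory2006}, Thm.~10.4, which packages exactly this construction in the form needed.
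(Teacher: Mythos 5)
Your proposal matches the paper's treatment: the theorem is imported wholesale from Hoory--Linial--Wigderson (Thm.~10.4, building on Capalbo et al.), and the only argument the paper itself supplies is the immediate application of Lemma~\ref{lem:expander_code_parameters} with $m=\alpha n$ and the stated bound on $\delta$, exactly as in your second part. Your additional sketch of the zig-zag/conductor iteration and the lossless-versus-spectral caveat is accurate but, like the paper, you correctly defer the quantitative constants $\gamma,\sigma$ to the cited reference rather than re-deriving them.
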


Theorem~\ref{thm:zigzag_construction} is a theoretically important result -- it provides a construction of a good family of classical codes, and moreover the parity checks involve only constant numbers of bits. However, the constants involved may not be the most practical, and random instances of bipartite graphs, like those analyzed in the Appendix of \cite{Sipser1996} or in Theorem~8.7 of \cite{Richardson2008}, may be less cumbersome to work with.

\subsection{Decoding}\label{app:expander_codes_decoding}
Sipser and Spielman \cite{Sipser1996} analyzed a decoder for classical expander codes that operates in greedy fashion by flipping any bits that overall reduce the number of unsatisfied parity checks. We will refer to this as the flip decoder. They show that for expanders with sufficiently large expansion ($\epsilon<1/4$) the flip decoder corrects any number of errors within a constant fraction of the code distance and does so in time proportional to the code size, i.e.~in linear time. Later Spielman \cite{Spielman1996} analyzed the flip decoder in the scenario that the parity checks are noisy in addition to the bits. It is this latter scenario that is most relevant to the quantum case where we may only noisily measure parity checks and not the data qubits themselves. We provide a somewhat generalized presentation of Spielman's analysis here. In particular, we show that for expanders with larger expansion (smaller $\epsilon$) the flip decoder deals with measurement errors better.

Let $\hat e_i$ denote the vector with elements $(\hat e_i)_j=\delta_{ij}$. Here it represents a flip of the $i^{\text{th}}$ bit. The flip decoder is defined as follows.
\begin{defn}[Sipser-Spielman Flip Decoder \cite{Sipser1996,Spielman1996}]\label{defn:spielman_noisy_flip_decoder}
Given an expander code $\mathcal{C}$ with parity check matrix $H\in\mathbb{F}_2^{m\times n}$ and a vector indicating unsatisfied checks $\vec u\in\mathbb{F}_2^m$, return a set of corrections $\pvec e'\in\mathbb{F}_2^n$ by doing the following.
\begin{enumerate}[label=(\arabic*)]
\item Initialize $\pvec e'=0^n$ and $\pvec u'=\vec u$.
\item Repeat
\begin{enumerate}
\item Find $i\in\{1,2\dots,n\}$ such that $|\pvec u'|>|H\hat e_i-\pvec u'|$. If none exists, return $\pvec e'$.
\item Let $\pvec e'\leftarrow \pvec e'+\hat e_i$ and $\pvec u'\leftarrow H\hat e_i-\pvec u'$.
\end{enumerate}
\end{enumerate}
Steps (2a) and (2b) constitute a decoding ``round".
\end{defn}

Since the number of unsatisfied checks $|\vec u'|$ decreases each round and there are $O(n)$ checks in a $[n,k,d]$ expander code, it is somewhat reasonable to believe that this decoder takes linear time. 
\begin{lem}[Sipser and Spielman \cite{Sipser1996}]\label{lem:flip_linear_time}
Let $\mathcal{C}$ be an $[n,k,d]$ expander code based on an $(n,m,b,\delta,\epsilon)$ expander graph with $b$ and $m/n$ constant. The flip decoder for $\mathcal{C}$ runs in time $O(n)$.
\end{lem}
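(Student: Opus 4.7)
The plan is to exploit two ingredients: (i) a crude bound on the number of rounds of the outer loop, and (ii) a data structure that makes each round take amortized constant time, using essentially that $b$ and $c = nb/m$ are both constants.

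First, I would bound the number of rounds. By the condition in step (2a), the decoder flips bit $i$ only when doing so strictly decreases $|\pvec u'|$, and it terminates as soon as no such bit exists. Since $|\pvec u'|$ is a non-negative integer that strictly decreases each round, and initially $|\pvec u'| = |\vec u| \le m$, the number of rounds is at most $m = (b/c)\, n = O(n)$ under the stated hypothesis that $b$ and $m/n$ are constants.

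Next, I would make each round $O(1)$. For each bit $i$, let $\gamma_i$ be the number of its $b$ neighboring checks that are currently unsatisfied, minus the number that are currently satisfied; flipping bit $i$ changes $|\pvec u'|$ by $-\gamma_i$, so step (2a) amounts to finding any $i$ with $\gamma_i \ge 1$. I would maintain (a) the current syndrome $\pvec u'$, (b) the array of values $\gamma_i$, and (c) a linked list (or similar constant-time container) $\mathcal{F}$ of indices $i$ with $\gamma_i \ge 1$. Initialization computes $\pvec u' = \vec u$, then reads off each $\gamma_i$ by inspecting the $b = O(1)$ checks incident to bit $i$, and finally builds $\mathcal{F}$ in one pass; this takes $O(n)$ time total. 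In each round, pop some $i$ from $\mathcal{F}$ (if empty, return $\pvec e'$); toggle the $b$ entries of $\pvec u'$ corresponding to the checks containing bit $i$; and for every bit $j$ sharing a check with bit $i$, update $\gamma_j$ and add or remove $j$ from $\mathcal{F}$ accordingly. The only bits whose $\gamma$-values can change are the at most $bc = O(1)$ bits reachable from $i$ by one check, so each round takes $O(1)$ time.

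Combining, initialization costs $O(n)$, and the at most $O(n)$ rounds cost $O(1)$ each, for a total runtime of $O(n)$.

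The main thing to be careful about is the second ingredient: verifying that the set of bits whose $\gamma$-value can change after flipping bit $i$ is indeed of constant size, which uses both $b = O(1)$ (checks per bit) and $c = O(1)$ (bits per check). Everything else is a routine book-keeping argument, and the strict decrease built into step (2a) of Definition~\ref{defn:spielman_noisy_flip_decoder} takes care of termination.
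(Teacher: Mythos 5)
Your proposal is correct and follows essentially the same route as the paper's proof: bound the number of rounds by $O(m)=O(n)$ via the strict decrease of the unsatisfied-check count, and use linked-list bookkeeping plus the fact that a single flip touches only $O(bc)=O(1)$ bits to make each round constant time. The only cosmetic difference is that you keep one list of currently-flippable bits while the paper keeps $b+1$ bucket lists indexed by the number of unsatisfied incident checks; both yield the same $O(n)$ bound.
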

\begin{proof}
Proving this simply requires a suitable data structure. We assume that the adjacency matrix of the expander (or equivalently the check matrix of the code) is given in a sparse matrix representation, so it takes constant time to obtain a list of neighbors of a bit or check in the expander graph.

Recall $\vec u\in\mathbb{F}_2^m$ is given as the value of the $m$ parity checks. At the beginning of the decoding, we calculate for each bit $i$ the number $v_i$ of unsatisfied checks that it is involved in. This takes $O(bn)=O(n)$ total time. We construct $b+1$ linked lists, one for each possible value of $v_i$, and place each $i$ in the corresponding list. That is, for each $i\in\{1,2,\dots,n\}$, we store $\{i,v_i,p_i,n_i\}$, where $p_i,n_i\in\{1,2,\dots,n\}$ point to the previous and next elements in the linked list (or are null if $i$ is at the head or tail). Variables $h_v\in\{1,2,\dots,n\}$ for every $v\in\{0,1,\dots,b\}$ point to the linked list heads (or null if the list is empty). The initial setup of $p_i$, $n_i$, and $h_v$ values takes $O(n)$ time. It is also important to note that removing from and attaching to the front of linked lists take $O(1)$ time. 

The main body of the flip decoding algorithm is the iteration in Step (2) of Definition~\ref{defn:spielman_noisy_flip_decoder}. Since the number of unsatisfied clauses strictly decreases during each round, there are at most $O(m)=O(n)$ rounds. Moreover, each round can be made to take constant time, as we now show.

Every round the algorithm begins by finding the non-null $h_v$ with largest $v$. This takes $O(b)$ time. If $0\le v\le b/2$, then there is no bit to flip to reduce the number of unsatisfied clauses and the algorithm returns. If $v>b/2$, then flip bit $h_v$. This causes $b=O(1)$ checks $j$ to flip and we update the values $u_j$ accordingly. Within each of the flipped checks are $c=O(1)$ bits $i$ which now participate in either one more or one fewer unsatisfied check. The values $v_i$ should be updated accordingly and the linked list element $\{i,v_i,p_i,n_i\}$ removed from its current linked list and inserted at the head of list $h_{v_i}$, which takes $O(1)$ time. Thus, the entire round takes $O(1)$ time.
\end{proof}

Presently, we concern ourselves with how well the decoder corrects errors. The main result is that the number of errors on the data can be reduced to a constant fraction of the number of errors on the checks.

\begin{thm}[Spielman \cite{Spielman1996}]\label{thm:flip_decoder}
Let $\mathcal{C}$ be an expander code constructed from a $(n,m,b,\delta,\epsilon)$ expander with $\epsilon<\frac14-\frac{r}{b}$ for $1\le r<b/4$. Given input $\vec u=H(\vec s_0+\vec e)+\vec f$ for $\vec s_0\in\mathcal{C}$ and provided
\begin{equation}\label{eq:noisy_flip_assumption}
|\vec e|+\frac{2}{b}|\vec f|\le(1-2\epsilon)\lfloor(1-\delta)n\rfloor,
\end{equation}
the noisy flip decoder returns $\pvec e'$ such that $|\pvec e'-\vec e|<|\vec f|/r$.
\end{thm}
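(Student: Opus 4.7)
The plan is to keep track of the effective error $\vec e_t = \vec e + \pvec e'_t$ throughout the run, so that the observed unsatisfied-check vector is $\pvec u'_t = H\vec e_t + \vec f$ and the returned correction satisfies $|\pvec e'-\vec e| = |\vec e_T|$ (with arithmetic in $\mathbb{F}_2$), where $T$ denotes the terminating round. By the flip rule, $|\pvec u'_t|$ strictly decreases each round; since it lies in $\{0,1,\dots,m\}$, the algorithm terminates in at most $m$ rounds, producing some final set $E_T = \supp(\vec e_T)$. The goal is to bound $|E_T|$.

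The core step is a double-counting argument applied at termination. For each bit $i$, let $u_i$ denote the number of observed-unsatisfied checks adjacent to $i$; termination means $u_i \le b/2$ for every $i$, so in particular $\sum_{i\in E_T} u_i \le b|E_T|/2$. On the other hand, each unique neighbor $v \in \Gamma_1(E_T)$ is \emph{truly} unsatisfied (it touches $E_T$ in exactly one place), hence is observed unsatisfied unless $v\in\supp(\vec f)$. Therefore $\sum_{i\in E_T} u_i \geq |\Gamma_1(E_T)| - |\vec f|$, and provided $|E_T| \le (1-\delta)n$ the unique-neighbors lemma (Lemma~\ref{lem:unique_neighbors}) yields $|\Gamma_1(E_T)|\ge(1-2\epsilon)b|E_T|$. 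Combining,
\begin{equation}
(1-2\epsilon)b|E_T| - |\vec f| \;\le\; \tfrac{b}{2}|E_T|,
\end{equation}
which rearranges to $|E_T| \le \frac{2|\vec f|}{b(1-4\epsilon)}$. The hypothesis $\epsilon < \tfrac14 - r/b$ gives $b(1-4\epsilon) > 4r$, so $|E_T| < |\vec f|/(2r) < |\vec f|/r$, as required.

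The main obstacle is the a priori assumption $|E_t| \le (1-\delta)n$ needed to apply the unique-neighbors lemma; the flip rule is free to choose a bit outside $E_t$, which increases $|E_t|$, so one must prove this as an invariant. I would use a bootstrap. Whenever $|E_t| \le (1-\delta)n$, the same counting (now bounding $|\pvec u'_t|$ from below via unique neighbors not killed by $\vec f$) gives $|\pvec u'_t| \ge (1-2\epsilon)b|E_t| - |\vec f|$, while monotonicity and the initial estimate $|\pvec u'_0| \le |H\vec e| + |\vec f| \le b|\vec e| + |\vec f|$ give $|\pvec u'_t| \le b|\vec e| + |\vec f|$. Solving yields
\begin{equation}
|E_t| \;\le\; \frac{1}{1-2\epsilon}\!\left(|\vec e| + \tfrac{2}{b}|\vec f|\right) \;\le\; \lfloor(1-\delta)n\rfloor,
\end{equation}
by the theorem's hypothesis \eqref{eq:noisy_flip_assumption}.

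Finally, consider the first round $t^\star$ (if any) at which $|E_{t^\star}| = \lfloor(1-\delta)n\rfloor + 1$; at the previous round $|E_{t^\star-1}|=\lfloor(1-\delta)n\rfloor$, so the above bound applies and, using the strict decrease of $|\pvec u'|$ over the $t^\star-1$ intervening rounds (so that $|\pvec u'_{t^\star-1}| \le b|\vec e|+|\vec f|-(t^\star-1)$), one obtains a sharp inequality forcing $t^\star \le 1$ and ultimately contradicting \eqref{eq:noisy_flip_assumption}. This secures the invariant $|E_t|\le\lfloor(1-\delta)n\rfloor$ for all $t\le T$, licenses the double counting above, and completes the proof. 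The delicate step is the last one: handling the boundary round cleanly without extra loss in the constants, which is where I expect the bookkeeping to be fussiest.
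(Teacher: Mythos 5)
Your proof is correct and rests on the same two pillars as the paper's: a double-counting of edges between the residual error set $E$ and the (observed) unsatisfied checks, valid while $|E|\le(1-\delta)n$, plus the monotone decrease of $|\pvec u'|$ to rule out $|E|$ ever reaching $\lfloor(1-\delta)n\rfloor+1$. The one genuine organizational difference is that you invoke Lemma~\ref{lem:unique_neighbors} directly — every unique neighbor of $E$ is truly unsatisfied, so at most $|\vec f|$ of them can be observed satisfied — whereas the paper re-derives an equivalent bound by splitting $\Gamma(E)$ into unsatisfied checks $U$ and satisfied checks $S$ and adjoining $|\vec f|$ auxiliary left-nodes for the check errors. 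Your route is slightly cleaner and, at termination (where every bit sees at most $b/2$ observed-unsatisfied checks), actually yields the sharper bound $|E_T|\le 2|\vec f|/(b(1-4\epsilon))<|\vec f|/(2r)$, a factor of two better than the stated conclusion; the paper's framing instead shows constructively that a flippable bit exists whenever $|\vec f|/r\le|E|\le(1-\delta)n$, which is the logically equivalent contrapositive. Your boundary argument for the invariant is also sound as sketched: $t^\star=0,1$ are excluded because $|\vec e|<\lfloor(1-\delta)n\rfloor$ strictly (as $\epsilon>0$), and $t^\star\ge2$ is excluded by the strict decrease of $|\pvec u'|$ combined with Eq.~\eqref{eq:noisy_flip_assumption} — this is exactly the strictness the paper extracts from $|U_0|>|U|$.
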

\begin{proof}
Let $E=\{i:\vec e_i+\pvec e'_i=1\}$ be the set of corrupted message bits and $U=\{j:\pvec u'_j=1\}$ be the set of unsatisfied checks at any point during execution of the algorithm. Let $S=\Gamma(E)-U$ be the satisfied checks in the neighborhood of $E$. Provided $|E|=|\pvec e'-\vec e|\le(1-\delta)n$, the expansion property implies
\begin{equation}\label{eq:noisy_flip_lemma_ineq1}
|U|+|S|\ge|\Gamma(E)|\ge(1-\epsilon)b|E|.
\end{equation}
This gives a lower bound on $|U|$ and $|S|$. 

We can get an upper bound on these by a counting argument. Imagine we add $m$ additional nodes to the left side of the bipartite expander graph and connect these new nodes pairwise to the corresponding $m$ check nodes on the right side. These new nodes represent the presence (if set to $1$) or absence (if set to $0$) of an error on the check bit. So, of these new nodes, $|\vec f|$ are set to 1, those in the set $F=\{j+n:\vec f_j=1\}$. Now every check in $U$ is connected to at least one node in $E\cup F$ and every check in $S$ is connected to at least two nodes in $E\cup F$. Since there are $b|E|+|\vec f|$ edges leaving $E\cup F$, we have
\begin{equation}\label{eq:noisy_flip_lemma_ineq2}
b|E|+|\vec f|\ge|U|+2|S|.
\end{equation}

Combine Eqs.~\eqref{eq:noisy_flip_lemma_ineq1}, \eqref{eq:noisy_flip_lemma_ineq2} to get
\begin{equation}
(1-\epsilon)b|E|-|U|\le|S|\le\frac12(b|E|+|\vec f|-|U|),
\end{equation}
or, removing $|S|$ entirely and using $\epsilon<\frac14-\frac{r}{b}$,
\begin{equation}\label{eq:noisy_flip_lemma_ineq3}
\left(\frac12b+2r\right)|E|<(1-2\epsilon)b|E|\le|\vec f|+|U|.
\end{equation}
Thus, if $|\vec f|/r\le|E|\le(1-\delta)n$, then
\begin{equation}
|U|>\frac12b|E|+|\vec f|.
\end{equation}
If $u_x=|\Gamma(x)\cap U|$ for $x\in E$ is the number of unsatisfied checks that $x$ participates in, then clearly
\begin{equation}
|\vec f|+\sum_{x\in E}u_x\ge|U|>\frac12b|E|+|\vec f|,
\end{equation}
or, simply,
\begin{equation}
\frac{1}{|E|}\sum_{x\in E}u_x>\frac12b,
\end{equation}
implying that there exists $y\in E$ such that $u_y>b/2$. Thus, there is always a bit to flip in step (2a) provided $|\vec f|/r\le|E|\le(1-\delta)n$. 

We complete the proof by showing that $|E|\le(1-\delta)n$ always holds and therefore the flip algorithm only finishes if $|E|=|\pvec e'-\vec e|<|\vec f|/r$.

The noisy flip algorithm flips one bit at a time and $|E|<\lfloor(1-\delta)n\rfloor$ at the beginning of the algorithm, so if $|E|>(1-\delta)n$ at some time, then there is a prior time at which $|E|=\lfloor(1-\delta)n\rfloor$. Then, we can apply Eq.~\eqref{eq:noisy_flip_lemma_ineq3} to find
\begin{equation}
|U|\ge(1-2\epsilon)b\lfloor(1-\delta)n\rfloor-|\vec f|.
\end{equation}
Let $U_0$ denote $U$ at the very start of the algorithm (i.e.~when $\pvec e'=0^n$ and $|E|=|\vec e|$). By Eq.~\eqref{eq:noisy_flip_lemma_ineq2}, we see
\begin{equation}
|U_0|\le b|\vec e|+|\vec f|.
\end{equation}
Moreover, the intermediate rounds of the algorithm always decrease the size of $U$. So, $|U_0|>|U|$ and hence
\begin{equation}
b|\vec e|+2|\vec f|>(1-2\epsilon)b\lfloor(1-\delta)n\rfloor.
\end{equation}
However, this is in contradiction with Eq.~\eqref{eq:noisy_flip_assumption}.
\end{proof}

We briefly remark that although $(1-2\epsilon)\lfloor(1-\delta)n\rfloor<d/2$ by Lemma~\ref{lem:expander_code_parameters}, it is not much less than the lower bound on $d/2$ from that lemma. The difference is the factor $(1-2\epsilon)/(1-\epsilon)$, which is constant and near unity when $\epsilon$ is constant and small. Also, since $|\vec e|+|\vec f|\ge|\vec e|+\frac2b|\vec f|$, the assumption
\begin{equation}
|\vec e|+|\vec f|\le (1-2\epsilon)\lfloor(1-\delta)n\rfloor
\end{equation}
is a weaker replacement for Eq.~\eqref{eq:noisy_flip_assumption}, but one that makes the total number of errors $|\vec e|+|\vec f|$ more prominent.

This theorem implies that errors can be kept at a manageable level over time. A simple model of data storage is one in which we periodically error correct based on noisy readout of the parity checks, and noise on the data occurs in between these corrections. Suppose at most $|\vec e|$ errors occur on the data between corrections and during correction at most $|\vec f|$ parity checks are misread. Then, after correction, Theorem~\ref{thm:flip_decoder} guarantees at most $|\vec f|/r$ errors remaining on the data. These errors combine with the $|\vec e|$ data errors in the next step. Thus, a steady state is achieved -- following any correction the data has at most $|\vec f|/r$ errors provided that
\begin{equation}
|\vec e|+\frac1r|\vec f|+\frac2b|\vec f|\le(1-2\epsilon)\lfloor(1-\delta)n\rfloor.
\end{equation}
It is sufficient (though weaker) for
\begin{equation}
|\vec e|+2|\vec f|\le(1-2\epsilon)\lfloor(1-\delta)n\rfloor.
\end{equation}
In this classical scenario, assuming constant error rates, $|\vec e|$ and $|\vec f|$ both scale linearly with $n$ and so this condition is realistically achievable, even asymptotically.

\section{Proof of Lemma~\ref{lem:gauge_fixing_distance}}\label{app:gauge_fixing_distance}
A subsystem code's distance is the minimum weight of a dressed logical operator. Thus, to show $D(\mathcal{G}')\ge D(\mathcal{G})$, we just need to show $\hat{\mathcal{L}}(\mathcal{G}')\le\hat{\mathcal{L}}(\mathcal{G})$. As $\mathcal{G}'$ is a gauge-fixing of $\mathcal{G}$, we have that $\mathcal{S}(\mathcal{G})\le\mathcal{S}(\mathcal{G}')\le\mathcal{G}'\le\mathcal{G}$ and $K(\mathcal{G})=K(\mathcal{G}')$.

Notice first that $\mathcal{L}(\mathcal{G})\le\mathcal{L}(\mathcal{G}')$ because anything that commutes with all elements of $\mathcal{G}$ also commutes with all elements of $\mathcal{G}'\le\mathcal{G}$. Second, elements of $\mathcal{S}(\mathcal{G}')-\mathcal{S}(\mathcal{G})$ are not in $\mathcal{L}(\mathcal{G})$, and so the quotient groups $\mathcal{L}(\mathcal{G})/\mathcal{S}(\mathcal{G})$ and $\mathcal{L}(\mathcal{G})/\mathcal{S}(\mathcal{G}')$ are isomorphic. Thus, combine these two observations to get
\begin{equation}
\mathcal{L}(\mathcal{G})/\mathcal{S}(\mathcal{G})=\mathcal{L}(\mathcal{G})/\mathcal{S}(\mathcal{G}')\le\mathcal{L}(\mathcal{G}')/\mathcal{S}(\mathcal{G}').
\end{equation}
However, $K(\mathcal{G})=K(\mathcal{G}')$ dictates that $|\mathcal{L}(\mathcal{G})/\mathcal{S}(\mathcal{G})|=|\mathcal{L}(\mathcal{G}')/\mathcal{G}'|$ so
\begin{equation}\label{eq:equal_quotients}
\mathcal{L}(\mathcal{G})/\mathcal{S}(\mathcal{G})=\mathcal{L}(\mathcal{G}')/\mathcal{S}(\mathcal{G}').
\end{equation}

Because $\mathcal{S}(\mathcal{G})\le\mathcal{G}$ and $\mathcal{S}(\mathcal{G}')\le\mathcal{G}'$,
\begin{align}
\hat{\mathcal{L}}(\mathcal{G})&=\mathcal{G}\text{\space}\mathcal{L}(\mathcal{G})=\mathcal{G}(\mathcal{L}(\mathcal{G})/\mathcal{S}(\mathcal{G})),\\
\hat{\mathcal{L}}(\mathcal{G}')&=\mathcal{G}'\text{\space}\mathcal{L}(\mathcal{G}')=\mathcal{G}'(\mathcal{L}(\mathcal{G}')/\mathcal{S}(\mathcal{G}')).
\end{align}
Using Eq.~\eqref{eq:equal_quotients} and the fact that $\mathcal{G}'\le\mathcal{G}$, we get $\hat{\mathcal{L}}(\mathcal{G}')\le\hat{\mathcal{L}}(\mathcal{G})$, completing the proof.

\bibliography{References}

\end{document}